\def\zero{{\boldsymbol{0}}}
\def\bb0{{\mathbb{0}}}
\def\bb{{\boldsymbol{b}}}
\def\bee{{\boldsymbol{e}}}
\def\bu{{\boldsymbol{u}}}
\def\bv{{\boldsymbol{v}}}
\def\bw{{\boldsymbol{w}}}
\def\bx{{\boldsymbol{x}}}
\def\by{{\boldsymbol{y}}}
\def\b0{{\boldsymbol{0}}}
\def\bA{{\boldsymbol{A}}}
\def\bB{{\boldsymbol{B}}}
\def\bC{{\boldsymbol{C}}}
\def\bF{{\boldsymbol{F}}}
\def\bI{{\boldsymbol{I}}}
\def\bJ{{\boldsymbol{J}}}
\def\bM{{\boldsymbol{M}}}
\def\bS{{\boldsymbol{S}}}
\def\bU{{\boldsymbol{U}}}
\def\bV{{\boldsymbol{V}}}
\def\bW{{\boldsymbol{W}}}
\def\bX{{\boldsymbol{X}}}
\def\b{{\mathrm{b}}}
\def\r0{{\mathbf{0}}}
\def\bbE{{\mathbb{E}}}
\def\bbR{{\mathbb{R}}}
\def\cN{\mathcal{N}}
\def\cT{\mathcal{T}}
\def\btheta{\bm \theta}
\def\brho{\bm \rho}
\def\bLambda{\bm \Lambda}
\def\bSigma{\bm \Sigma}
\def\bPi{\bm \Pi}
\def\bsf0{{\bm{\mathsf{0}}}}
\def\N0{{N_{\mathrm{0}}}}
\def\bsf{{\boldsymbol{s}_\mathrm{f}}}
\newcommand{\be}{\begin{equation}}
\newcommand{\ee}{\end{equation}}
\newcommand{\bal}{\begin{align}}
\newcommand{\eal}{\end{align}}
\def\tr {{\rm tr}}
\newcommand{\cmark}{\ding{51}}%
\newcommand{\xmark}{\ding{55}}%
\theoremstyle{remark}
\newtheorem{theorem}{Theorem}
\newtheorem{lemma}{Lemma}
\newtheorem{definition}{Definition}
\newtheorem{example}{Example}
\DeclareMathOperator*{\col}{col}
\DeclareMathOperator*{\vspan}{span} 
\DeclareMathOperator*{\supp}{supp} 
\begin{document}
%
\title{Cramer-Rao Bound for Arbitrarily Constrained Sets
\thanks{H. Do and A. Lozano are with Univ. Pompeu Fabra, 08018 Barcelona (e-mail:\{heedong.do, angel.lozano\}@upf.edu).
Their work is supported by the Maria de Maeztu Units of Excellence Programme CEX2021-001195-M funded by MICIU/AEI/10.13039/501100011033, and by the Departament de Recerca i Universitats de la Generalitat de Catalunya.}
}

\author{\IEEEauthorblockN{Heedong~Do},
 {\it Member,~IEEE},
 \and
 \IEEEauthorblockN{Angel~Lozano},
{\it Fellow,~IEEE}
\vspace{-4mm}
}
\maketitle



%


\maketitle

\begin{abstract}
This paper presents a Cramer-Rao bound (CRB) for the estimation of parameters confined to an arbitrary set. Unlike existing results that rely on equality or inequality constraints, manifold structures, or the nonsingularity of the Fisher information matrix, the derived CRB applies to any constrained set and holds for any estimation bias and any Fisher information matrix. The key geometric object governing the new CRB is the tangent cone to the constraint set, whose span determines how the constraints affect the estimation accuracy. This CRB subsumes, unifies, and generalizes known special cases,
offering an intuitive and broadly applicable framework to characterize the minimum mean-square error of constrained estimators. 
\end{abstract}

\begin{IEEEkeywords}
Estimation theory, minimum mean-square error, Cramer-Rao bound, constrained Cramer-Rao bound
\end{IEEEkeywords}


%
\IEEEpeerreviewmaketitle

\section{Introduction}

The Cramer-Rao bound (CRB) is a reference benchmark for the mean-square error (MSE) of estimators \cite{kay1993fundamentals}.
It relies on local statistical information, the Fisher information matrix, whereby it is tight in the small-error regime.
Although other bounds that exploit global information can be tighter in other regimes \cite{barankin1949locally, abel1993bound, todros2010general}, these other bounds are computationally demanding, analytically intractable, and they sometimes fail to capture threshold effects in the large-error regime \cite{mueller1995estimation, abbey1996barankin, athley2005threshold, todros2015limitations}. In contrast, the CRB offers a computational and analytical tractability that has led to its wide adoption in signal processing, including:
\begin{itemize}
    \item \textit{Parameter estimation.} Parameter estimation in additive Gaussian noise \cite{slepian1954estimation}, frequency estimation \cite{rife1974single}, polynomial phase estimation \cite{peleg1991cramer}, tensor decomposition \cite{sidiropoulos2017tensor}.
    \item \textit{Array processing.} Far-field source localization \cite{stoica1989music}, near-field source localization \cite{huang1991near}, source localization using sparse arrays \cite{wang2017coarrays}.
    \item \textit{Communication.} Synchronization \cite{schmidl1997robust}, signal-to-noise ratio estimation \cite{pauluzzi2000comparison}, wireless localization \cite{patwari2003relative}, feedforward carrier recovery \cite{pfau2009hardware}, channel estimation \cite{larsen2009performance}, or integrated sensing and communications \cite{xiong2023fundamental}.
    \item \textit{Image processing.} Image registration \cite{robinson2004fundamental}, image super-resolution \cite{robinson2006statistical}, image denoising \cite{chatterjee2019is}.
\end{itemize}

The standard CRB concerns parameters residing in the Euclidean space or an open subset thereof. Extended to sets described by equality and inequality constraints, it is termed \textit{constrained CRB} \cite{gorman1990lower}. 
\nocite{marzetta1993simple, stoica1998cramer, moore2007constrained, benhaim2009constrained, zhiguang2011new, benhaim2010cramer}
The type of such constraints has gradually expanded, and the restrictions on the constrained CRB's applicability have eased over time (see Table \ref{table:prior_art}, where the various restrictions are described on the basis of quantities introduced in subsequent sections). 


This paper further generalizes 
the constrained CRB to any arbitrary set.
The ensuing bound is independent of the parameterization of the set and of the basis choice,
and it subsumes a number of established results. 
Resting on a generalized version of the Schur complement lemma, its proof
is intuitive and insightful.

      
      


\begin{table}
\setlength{\tabcolsep}{7pt}
\centering
\begin{threeparttable}
\caption{Prior Art and its Restrictions}
\label{table:prior_art}
\begin{tabular}{c | l | c c c c c} 
\toprule
& & \multicolumn{5}{c}{Restrictions\tnote{*}}
\\
Reference & Constraint & 1 & 2 & 3 & 4 & 5
\\
\midrule
\cite{gorman1990lower} & inequality & \cmark & -
&
-
& \xmark & \xmark
\\
\cite{marzetta1993simple} & equality & \cmark & -
&
-
& \cmark & \cmark
\\
\cite{stoica1998cramer} & equality & \xmark & \cmark & \cmark & \cmark & \cmark
\\
\cite{moore2007constrained} & equality & \xmark & \cmark & \cmark & \cmark & \cmark
\\
\cite{benhaim2009constrained} & equality & \xmark & \xmark & \cmark & \xmark & \xmark
\\
\cite{zhiguang2011new} & equality & \xmark & \cmark & \cmark & \xmark & \xmark 
\\
\cite{benhaim2010cramer} & locally balanced set & \xmark & \xmark & \cmark & \xmark & \xmark 
\\
\textbf{This paper} & any set & \xmark & \xmark & \xmark & \xmark & \xmark
\\
\bottomrule
\end{tabular}
\begin{tablenotes}
\item[*] Fewer restrictions mean wider applicability. The column headers correspond to these restrictions:
\begin{enumerate}
    \item[1.] Nonsingularity of the Fisher information matrix ($\bJ > \zero$).
    \item[2.] Nonsingularity of the Fisher information matrix projected onto the space spanned by the tangent cone ($\bU^\top\bJ\bU > \zero$). A dash indicates that the restriction is immaterial because there is no dependence on $\bU$.
    \item[3.] Orthonormality of the basis of the space spanned by the tangent cone ($\bU^\top\bU=\bI$). A dash indicates that the restriction is immaterial because there is no dependence on $\bU$.
    \item[4.] Range of the estimator included in the constraint set ($\hat{\btheta}\in\Theta$).
    \item[5.] Unbiasedness of the estimator ($\bb(\btheta)=\zero$ for $\btheta\in\Theta$).
\end{enumerate}
\end{tablenotes}
\end{threeparttable}
\end{table}

The manuscript is organized as follows. 
Sec. \ref{sec:tangent_cone} introduces the notion of tangent cone, by means of which the estimation covariance in a constrained set
is bounded in Sec.~\ref{sec:ccrb}. 
By identifying the best such bound, Sec.~\ref{sec:monotonicity} establishes the constrained CRB as a function of span of the tangent cone to the constrained set. Concrete examples of the tangent cone and its span are put forth in Sec.~\ref{sec:examples} for widely used constraint sets. Finally, the paper concludes in Sec.~\ref{sec:discussion}. 

\begin{figure*}
    \centering
    \subfloat[Crescent-shaped set $\Theta\subset \bbR^2$]{
    \begin{tikzpicture}[x=1, y=1]
    \clip (-5,-75) rectangle (105,75);
    \path[draw=black, fill=blue, fill opacity = 0.3]
        (0,50) .. controls (50,80) and (100,50) .. (100,0)
        .. controls (100,-50) and (50,-80) .. (0,-50)
        .. controls (60,-60) and (70,-20) .. (70,0)
        .. controls (70,20) and (60,60) .. (0,50)
        --cycle;
    \end{tikzpicture}
    }\hspace{2mm}
    \subfloat[Tangent cone at a vertex]{
    \begin{tikzpicture}[x=1, y=1]
    \clip (-5,-75) rectangle (105,75);
    \path[draw=black, fill=blue, fill opacity = 0.3]
        (0,50) .. controls (50,80) and (100,50) .. (100,0)
        .. controls (100,-50) and (50,-80) .. (0,-50)
        .. controls (60,-60) and (70,-20) .. (70,0)
        .. controls (70,20) and (60,60) .. (0,50)
        --cycle;
    \path[draw=black, fill=red, fill opacity = 0.2]
        (0,50) -- (75,95) -- (120,70) -- (0,50) -- cycle;
    \filldraw (0,50) circle (1pt) node[below]{$\btheta$};
    \draw (50,67) node[]{$\cT_\Theta(\btheta)$};
    \end{tikzpicture}
    }\hspace{2mm}
    \subfloat[Tangent cone at an edge]{
    \begin{tikzpicture}[x=1, y=1]
    \clip (-5,-75) rectangle (105,75);
    \path[draw=black, fill=blue, fill opacity = 0.3]
        (0,50) .. controls (50,80) and (100,50) .. (100,0)
        .. controls (100,-50) and (50,-80) .. (0,-50)
        .. controls (60,-60) and (70,-20) .. (70,0)
        .. controls (70,20) and (60,60) .. (0,50)
        --cycle;
    \path[draw=black, fill=red, fill opacity = 0.2]
        (70,80) -- (120,80) -- (120,-80) -- (70,-80) -- cycle;
    \filldraw (70,0) circle (1pt) node[left]{$\btheta$};
    \draw (85,10) node[]{$\cT_\Theta(\btheta)$};
    \end{tikzpicture}
    }\hspace{2mm}
    \subfloat[Tangent cone at an interior point]{
    \begin{tikzpicture}[x=1, y=1]
    \clip (-5,-75) rectangle (105,75);
    \path[draw=black, fill=blue, fill opacity = 0.3]
        (0,50) .. controls (50,80) and (100,50) .. (100,0)
        .. controls (100,-50) and (50,-80) .. (0,-50)
        .. controls (60,-60) and (70,-20) .. (70,0)
        .. controls (70,20) and (60,60) .. (0,50)
        --cycle;
    \path[fill=red, fill opacity = 0.2]
        (-5,70) -- (105,70) -- (105,-70) -- (-5,-70) -- cycle;
    \filldraw (80,20) circle (1pt) node[left]{$\btheta$};
    \draw (20,0) node[]{$\cT_\Theta(\btheta)$};
    \end{tikzpicture}
    }
    \caption{Set $\Theta$ with the tangent cone $\cT_\Theta(\btheta)$ shaded in red for different points $\btheta$.}
    \label{fig:crescent}
\end{figure*}
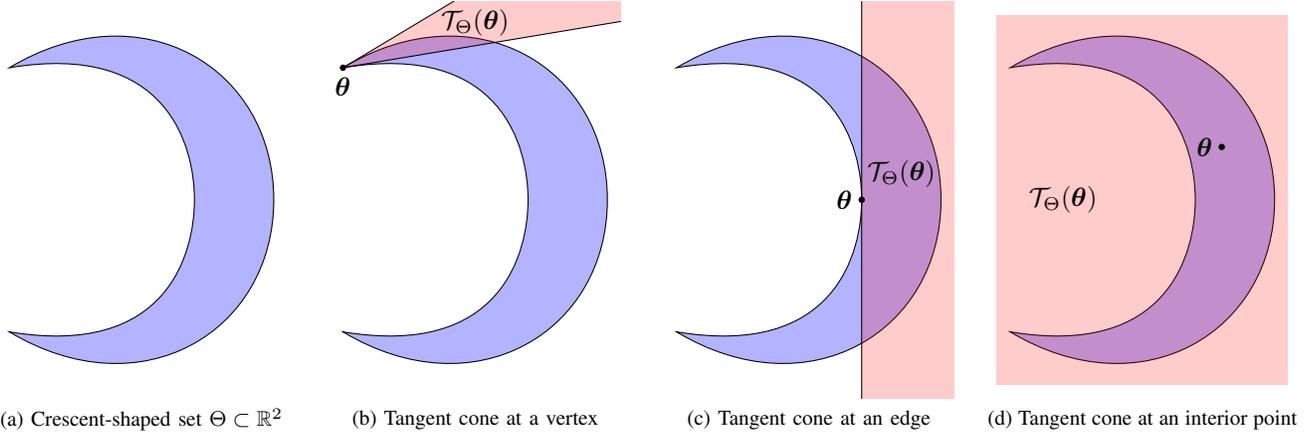

\section{The Tangent Cone}
\label{sec:tangent_cone}

Consider a set $\Theta\subset \bbR^k$ and a function $f:\Theta\rightarrow \bbR^m$, which shall respectively correspond to the constraint set and to either the bias function or the log-likelihood function.
It is assumed that a differentiable extension of $f$
to some open set including $\Theta$ exists.

\begin{definition} \textit{(Tangent vector)}
\label{tv}
A vector $\bu$ is a tangent vector to $\Theta $ at $\btheta \in \Theta$ if there are sequences $\{\btheta_i\in\Theta\}$ and $\{\lambda_i>0\}$ such that $\btheta_i \rightarrow \btheta$ and $\lambda_i \, (\btheta_i-\btheta) \rightarrow \bu$ \cite[Def. 6.1]{rockafellar1998variational}.
\end{definition}
A related notion is that of \textit{feasible direction}. A vector $\bu$ is a feasible direction of $\Theta$ at $\btheta \in \Theta$ if there exists $\epsilon>0$ such that the line
segment connecting $\btheta$ and $\btheta + \epsilon \bu$ lies in $\Theta$. Any feasible direction is a tangent vector, but not vice versa.
Intuitively, tangent vectors are associated with the trajectories over all possible curved paths that stay in the set, while the more restrictive feasible directions correspond to the straight paths that stay in the set.
For instance, on the spherical shell $\Theta = \{\btheta:\|\btheta\|=1\}$, there are plenty of tangent vectors; for $\bu$ perpendicular to $\btheta$, letting
\begin{align}
    \btheta_i = \frac{\btheta + \frac{\bu}{i}}{\|\btheta + \frac{\bu}{i}\|}\in \Theta \qquad \lambda_i = i,
\end{align}
one can easily verify that $\lambda_i(\btheta_i - \btheta) \rightarrow \bu$.
The set of feasible directions, however, is empty at every point on the shell.

The set of all tangent vectors to $\Theta$ at $\btheta$ is called the \textit{tangent cone} to $\Theta$ at $\btheta$, denoted by $\cT_{\Theta}(\btheta)$.
Fig. \ref{fig:crescent} illustrates this concept, which plays a central role in the characterization of the constrained CRB. Concrete examples of the tangent cones are furnished in Sec. \ref{sec:examples}.
Throughout the paper, $\bW$ denotes an arbitrary matrix having $k$ rows whereas $\bU$ is used to indicate a matrix whose columns are in $ \cT_\Theta(\btheta)$.

Also instrumental in the characterization of the constrained CRB is the notion of directional derivative.

\begin{definition} \textit{(Directional derivative)}
The directional derivative of a function $f:\bbR^k \rightarrow \bbR^m$ at $\btheta$ in the direction of $\bu$ is
\begin{align}
    \frac{\partial f}{\partial \bu}\equiv\lim_{t\rightarrow 0} \frac{f(\btheta+t\bu)-f(\btheta)}{t}.
\end{align}
If $f$ is differentiable at $\btheta$, it holds that \cite[Proposition 1.7.14]{hubbard2015vector}
\begin{align}
    \frac{\partial f}{\partial \bu} = \frac{\partial f}{\partial \btheta} \bu,
\end{align}
where $\frac{\partial f}{\partial \btheta} \in \bbR^{m\times k}$ is the derivative of $f$.
\end{definition}

Importantly, the directional derivative of $f$ at $\btheta$ in the direction $\bu\in \cT_\Theta(\btheta)$ does not depend on the choice of the differentiable extension (see App. \ref{Valencia}).

\section{A Constrained Bound on the Covariance Matrix of the Estimator}
\label{sec:ccrb}

Consider the estimation of 
$    \btheta  = [\theta_1 \,\cdots\, \theta_k]
^\top \in \Theta $    
from some observation $\by$ with probability density function $p(\by;\btheta)$. The bias and covariance matrix of the estimator $\hat{\btheta}(\by)$ are
\begin{align}
    \bb\equiv \bbE [\hat{\btheta}]-\btheta
\end{align}    
   and 
    \begin{align}
    \bC_{\btheta} &\equiv\bbE \! \left[ \! \left(\hat{\btheta}-\bbE [\hat{\btheta}] \right) \! \left(\hat{\btheta}-\bbE [\hat{\btheta}] \right)^{\!\!\top} \right] ,
\end{align}
with all expectations taken over $p(\by;\btheta)$.

To procure the MSE matrix, the bias-variance decomposition formula comes in handy, namely \cite[Remark 7]{gorman1990lower}
\begin{align}
    \bbE \! \left[ \big(\hat{\btheta}-\btheta\big) \big(\hat{\btheta}-\btheta \big)^{\! \top} \right]
    &=\bC_{\btheta} + \bb\bb^\top.
\end{align}

In the absence of constraints, i.e., for $\Theta = \bbR^k$, the covariance of the estimator is bounded by the unconstrained CRB, summarized into Thm.~\ref{theorem:unconstrained_crb} below.
The Fisher information matrix is denoted by 
\begin{align}
\bJ \equiv \bbE \! \left[ \! \left(\frac{\partial L}{\partial \btheta} \right)^{\!\!\top} \frac{\partial L}{\partial \btheta} \right],
\end{align}
with $L(\by;\btheta) = \log p(\by;\btheta)$ the log-likelihood function
and with $\col(\cdot)$ returning the column space of a matrix.

\begin{theorem}
\label{theorem:unconstrained_crb}
An estimator 
with bias $\bb$ satisfies
\begin{align}
    \bC_{\btheta} \geq \bigg(\bI + \frac{\partial \bb}{\partial \btheta}\bigg)\bJ ^{\dagger}\bigg(\bI + \frac{\partial \bb}{\partial \btheta}\bigg)^{\!\!\top} \label{unconstrained_crb}
\end{align}
and
\begin{align}
    \col \! \left( \! \bigg(\bI + \frac{\partial \bb}{\partial \btheta}\bigg)^{\!\!\top}\right) \subset \col \bJ . \label{range_crb}
\end{align}
Eq. \eqref{unconstrained_crb} holds with equality if and only if
\begin{align}
    \hat{\btheta}-\bbE[\hat{\btheta}] = \bigg(\bI + \frac{\partial \bb}{\partial \btheta}\bigg)\bJ^\dagger\bigg(\frac{\partial L}{\partial \btheta}\bigg)^{\!\!\top} 
\end{align}
in the mean-square sense.
\end{theorem}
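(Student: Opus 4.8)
The plan is to encode both claims as a single positive-semidefinite block matrix and then extract them through a generalized Schur complement, so that no invertibility assumption on $\bJ$ is ever needed. First I would introduce the score vector $\bg \equiv \big(\frac{\partial L}{\partial \btheta}\big)^{\top} \in \bbR^k$. Under the usual regularity conditions permitting differentiation under the integral sign, $\int \frac{\partial p}{\partial \btheta}\,\rmd\by = \frac{\partial}{\partial \btheta}\int p\,\rmd\by = \bzero$ gives $\bbE[\bg] = \bzero$, while $\bbE[\bg\bg^{\top}] = \bJ$ by definition. Abbreviating $\bD \equiv \bI + \frac{\partial \bb}{\partial \btheta}$, I would then obtain the cross-covariance by differentiating $\bbE[\hat{\btheta}] = \int \hat{\btheta}(\by)\,p(\by;\btheta)\,\rmd\by = \btheta + \bb$ and using $\frac{\partial p}{\partial \btheta} = p\,\frac{\partial L}{\partial \btheta}$; this yields $\bbE[\hat{\btheta}\,\bg^{\top}] = \bD$ and hence, since $\bbE[\bg] = \bzero$, $\bbE[(\hat{\btheta}-\bbE[\hat{\btheta}])\,\bg^{\top}] = \bD$.

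Stacking the centered estimator and the score, their joint covariance
\begin{align}
\bM \equiv \bbE\!\left[ \begin{pmatrix}\hat{\btheta}-\bbE[\hat{\btheta}]\\ \bg\end{pmatrix}\!\begin{pmatrix}\hat{\btheta}-\bbE[\hat{\btheta}]\\ \bg\end{pmatrix}^{\!\top}\right] = \begin{pmatrix}\bC_{\btheta} & \bD\\ \bD^{\top} & \bJ\end{pmatrix}
\end{align}
is positive semidefinite, being a covariance matrix. I would then invoke the generalized Schur complement lemma: a symmetric block matrix $\left(\begin{smallmatrix}\bP & \bQ\\ \bQ^{\top} & \bR\end{smallmatrix}\right)$ with $\bR \geq \bzero$ is positive semidefinite if and only if $\col(\bQ^{\top}) \subset \col(\bR)$ and $\bP - \bQ\bR^{\dagger}\bQ^{\top} \geq \bzero$. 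Applying it with $(\bP,\bQ,\bR) = (\bC_{\btheta},\bD,\bJ)$ yields simultaneously $\col(\bD^{\top}) \subset \col(\bJ)$, which is \eqref{range_crb}, and $\bC_{\btheta} \geq \bD\bJ^{\dagger}\bD^{\top}$, which is \eqref{unconstrained_crb}.

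For the equality characterization, I would study the residual $\bz \equiv (\hat{\btheta}-\bbE[\hat{\btheta}]) - \bD\bJ^{\dagger}\bg$. Expanding $\bbE[\bz\bz^{\top}]$ and substituting the three covariance blocks, together with the Moore--Penrose identity $\bJ^{\dagger}\bJ\bJ^{\dagger} = \bJ^{\dagger}$, collapses the expression to $\bbE[\bz\bz^{\top}] = \bC_{\btheta} - \bD\bJ^{\dagger}\bD^{\top}$, i.e. exactly the Schur complement. Hence equality in \eqref{unconstrained_crb} holds if and only if $\bbE[\bz\bz^{\top}] = \bzero$, equivalently $\bz = \bzero$ in the mean-square sense, which is precisely $\hat{\btheta}-\bbE[\hat{\btheta}] = \bD\bJ^{\dagger}\big(\frac{\partial L}{\partial \btheta}\big)^{\top}$.

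I expect the main obstacle to be the singular Fisher information matrix. In the classical invertible case a bare Cauchy--Schwarz step suffices and the range condition is vacuous; here, by contrast, the inclusion $\col(\bD^{\top}) \subset \col(\bJ)$ must both be established and then exploited to legitimize the pseudoinverse manipulations. The generalized Schur complement is the instrument that produces this range condition automatically from positive semidefiniteness, so the crux of the argument is to state and justify that lemma in the form above; everything else reduces to bookkeeping with the covariance blocks.
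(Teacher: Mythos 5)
Your proposal is correct and follows essentially the same route as the paper's own proof: both form the joint covariance of the stacked vector $\big(\hat{\btheta}-\bbE[\hat{\btheta}],\,(\partial L/\partial\btheta)^{\top}\big)$, obtain the off-diagonal block $\bI + \frac{\partial \bb}{\partial \btheta}$ by differentiating the identities $1=\int p\,\rmd\by$ and $\btheta+\bb=\int\hat{\btheta}\,p\,\rmd\by$, and then apply the generalized (Moore--Penrose) Schur complement lemma, which delivers the inequality \eqref{unconstrained_crb} and the column-space condition \eqref{range_crb} simultaneously. Your explicit residual computation showing $\bbE[\bz\bz^{\top}] = \bC_{\btheta} - \big(\bI + \frac{\partial \bb}{\partial \btheta}\big)\bJ^{\dagger}\big(\bI + \frac{\partial \bb}{\partial \btheta}\big)^{\top}$ is a welcome elaboration of the equality case, which the paper leaves implicit in its appeal to the lemma, but it is not a genuinely different argument.
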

\begin{proof}
A standard proof is provided in App. \ref{app:unconstrained_bound} by means of the Schur complement lemma, itself proved in App.~\ref{app:schur}. 
\end{proof}

Let us now consider a constrained set, $\Theta \subset \bbR^k$.
The key difference when dealing with such a set 
is that the gradients are to be replaced by directional derivatives
that can only be taken in directions corresponding to vectors tangent to the set.

\begin{theorem} 
\label{theorem:crbs}
Given any matrix
\begin{align}
    \bU = \begin{bmatrix}
        \bu_1 & \bu_2 & \cdots & \bu_\ell
    \end{bmatrix},
\end{align}
whose columns are in $\cT_\Theta(\btheta)$,
it holds that 
\begin{align}
    \bC_{\btheta} &\geq \bigg(\bI + \frac{\partial \bb}{\partial \btheta}\bigg)\bU \big(\bU^\top \bJ \bU \big)^{\!\dagger}\bU^\top\bigg(\bI + \frac{\partial \bb}{\partial \btheta}\bigg)^{\!\! \top}  \label{ccrb_arbitrary}
\end{align}
and
\begin{align}
    \col \! \left(\bU^\top\bigg(\bI + \frac{\partial \bb}{\partial \btheta}\bigg)^{\!\!\top}\right) \subset  \col \! \left(\bU^\top\bJ\bU \right).\label{range_ccrb_arbitrary}
\end{align}
Eq. \eqref{ccrb_arbitrary} holds with equality if and only if
\begin{align}
    \hat{\btheta}-\bbE[\hat{\btheta}] = \bigg(\bI + \frac{\partial \bb}{\partial \btheta}\bigg)\bU \big( \bU^\top\bJ\bU \big)^{\!\dagger} \bU^\top \bigg(\frac{\partial L}{\partial \btheta}\bigg)^{\!\!\top} 
\end{align}
in the mean-square sense.
\end{theorem}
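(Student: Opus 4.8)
The plan is to replicate the proof of Thm.~\ref{theorem:unconstrained_crb}, but with the \emph{projected} score $\bU^\top(\partial L/\partial\btheta)^\top$ playing the role that the full score $(\partial L/\partial\btheta)^\top$ plays in the unconstrained case. First I would assemble the random vector stacking the estimation error $\hat{\btheta}-\bbE[\hat{\btheta}]\in\bbR^k$ on top of the projected score $\bU^\top(\partial L/\partial\btheta)^\top\in\bbR^\ell$, and form its $(k+\ell)\times(k+\ell)$ covariance matrix, which is positive semidefinite by construction. Its diagonal blocks are $\bC_{\btheta}$ and $\bbE[\bU^\top(\partial L/\partial\btheta)^\top(\partial L/\partial\btheta)\bU]=\bU^\top\bJ\bU$, while the off-diagonal block is the cross-covariance $\bbE[(\hat{\btheta}-\bbE[\hat{\btheta}])(\partial L/\partial\btheta)\bU]$.

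The crux is to evaluate this cross-covariance. Working column by column, for a column $\bu_j$ of $\bU$ I would write $\bbE[(\hat{\btheta}-\bbE[\hat{\btheta}])\,\partial L/\partial\bu_j]$, invoke the regularity identity $\bbE[\partial L/\partial\bu_j]=0$ to drop the subtracted mean, and then interchange expectation with the directional derivative to obtain $\partial\,\bbE[\hat{\btheta}]/\partial\bu_j=\partial(\btheta+\bb)/\partial\bu_j=(\bI+\partial\bb/\partial\btheta)\bu_j$. Collecting the columns yields the block $(\bI+\partial\bb/\partial\btheta)\bU$, matching the numerator in \eqref{ccrb_arbitrary}. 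This is the step I expect to be the main obstacle: since $\bu_j$ is merely a tangent vector and not necessarily a feasible direction, the perturbed parameter $\btheta+t\bu_j$ need not lie in $\Theta$, so one must argue that the directional derivatives of both $L$ and $\bb$ are well defined along $\bu_j\in\cT_\Theta(\btheta)$ and independent of the chosen differentiable extension (App.~\ref{Valencia}), and that differentiation and integration may be interchanged along such tangent directions.

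With the cross-covariance in hand, the positive semidefinite block matrix reads
\begin{align}
    \begin{bmatrix} \bC_{\btheta} & \big(\bI+\tfrac{\partial\bb}{\partial\btheta}\big)\bU \\ \bU^\top\big(\bI+\tfrac{\partial\bb}{\partial\btheta}\big)^{\!\top} & \bU^\top\bJ\bU \end{bmatrix}\geq\zero. \nonumber
\end{align}
Finally I would invoke the generalized Schur complement lemma (App.~\ref{app:schur}) with pivot block $\bU^\top\bJ\bU\geq\zero$: its range condition returns \eqref{range_ccrb_arbitrary}, while nonnegativity of the Schur complement returns the bound \eqref{ccrb_arbitrary}. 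The equality case of the same lemma characterizes when the Schur complement vanishes, namely when the error equals its least-squares prediction from the projected score, i.e., $\hat{\btheta}-\bbE[\hat{\btheta}]=(\bI+\partial\bb/\partial\btheta)\bU(\bU^\top\bJ\bU)^\dagger\bU^\top(\partial L/\partial\btheta)^\top$ in the mean-square sense. The whole argument never presumes $\bJ>\zero$, $\bU^\top\bJ\bU>\zero$, orthonormal columns, unbiasedness, or $\hat{\btheta}\in\Theta$, which is what frees the bound from the restrictions tabulated for the prior art.
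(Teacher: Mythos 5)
Your proposal is correct and follows essentially the same route as the paper's own proof in App.~\ref{app:constrained_bound}: the same stacked random vector (error on top of the directional-derivative score $\bU^\top(\partial L/\partial\btheta)^\top$), the same directional differentiation of the identities $1=\int p(\by;\btheta)\,d\by$ and $\btheta+\bb=\int\hat{\btheta}\,p(\by;\btheta)\,d\by$ to obtain the cross-covariance block $\big(\bI+\frac{\partial\bb}{\partial\btheta}\big)\bU$ via the chain rule, and the same application of the generalized Schur complement lemma of App.~\ref{app:schur} to extract the bound, the column space condition, and the equality characterization. You also correctly flag the one subtle point, handled in App.~\ref{Valencia}, that directional derivatives along mere tangent vectors are well defined and independent of the differentiable extension.
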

\begin{proof}
See App. \ref{app:constrained_bound}.
\end{proof}

As advanced, while the bias and the likelihood function are defined over $\Theta$, it has been assumed in Thm. \ref{theorem:crbs} that their differentiable extensions to some open set including $\Theta$ exist. (For the unbiased case, this extension trivially exists; otherwise, one may check the applicability of the Whitney's extension theorem \cite{whitney1934analytic}.)

Also worth noting is that,
in \eqref{range_ccrb_arbitrary}, $\col \! \big( \bU^\top\bJ\bU \big)$ can be replaced by $\col \! \big( \bJ^{\scriptscriptstyle 1/2}\bU \big)$  thanks to \cite[Thm. 7.64]{axler2024linear}
\begin{align}
    \col\bA^\top = \col \! \big(\bA^\top\bA \big). \label{axler}    
\end{align}
This property shall come in handy in a number of derivations.



\subsection{Reduction Relative to the Unconstrained CRB}

Recall that $\bW$ denotes an arbitrary matrix having $k$ rows.
For $\bJ>\zero$, the matrix function
\begin{align}
    \bW\mapsto \bW \big(\bW^\top \bJ \bW \big)^{\!\dagger}\bW^\top \label{matrix_function_without_bias}
\end{align}
admits the decomposition
\begin{align}
    &\bW (\bW^\top \bJ \bW)^{\dagger}\bW^\top\\
    &\qquad =\bJ^{-\frac{1}{2}} \cdot\bJ^{\frac{1}{2}}\bW(\bW^\top \bJ \bW)^{\dagger}\bW^\top\bJ^{\frac{1}{2}}\cdot \bJ^{-\frac{1}{2}}  \\
    &\qquad=\bJ^{-\frac{1}{2}} \cdot\bJ^{\frac{1}{2}}\bW(\bJ^{\frac{1}{2}}\bW)^\dagger\cdot \bJ^{-\frac{1}{2}}, \label{ccrb_decomposition}
\end{align}
which holds because
\begin{align}
    (\bA^\top\bA)^\dagger \bA^\top = \bA^\dagger . \label{moore_penrose_property}
\end{align}

In \eqref{ccrb_decomposition}, the middle matrix projects onto $\col (\bJ^{\frac{1}{2}}\bW)$, yielding $\bJ^{\frac{1}{2}}\bW(\bJ^{\frac{1}{2}}\bW)^\dagger \leq \bI$. It follows that
\begin{align}
    \bW(\bW^\top \bJ \bW)^{\dagger}\bW^\top
    \leq \bJ^{-\frac{1}{2}}\cdot \bI \cdot \bJ^{-\frac{1}{2}}
    = \bJ^{-1}. \label{constraints_lower_crb}
\end{align}
As the inequality holds for any matrix with $k$ rows, 
it is applicable to $\bU$ in Thm. \ref{theorem:crbs}, yielding
\begin{align}
    \bU(\bU^\top \bJ \bU)^{\dagger}\bU^\top \label{constraints_lower_crb_}
    \leq \bJ^{-1}.
\end{align}
This confirms that the bound obtained with Thm. \ref{theorem:crbs} is always smaller than the unconstrained CRB \cite[Remark~4]{gorman1990lower}.
(While the bias has been omitted for the sake of compactness, it can be straightforwardly incorporated.)

This reduction with respect to the unconstrained CRB can be explicitly quantified.
Let $\bF$ be a matrix whose column space is the orthogonal complement of that of $\bU$. From the nonsingularity of $\bJ$, the column space of $\bJ^{-\frac{1}{2}}\bF$ is the orthogonal complement of that of $\bJ^{\frac{1}{2}}\bU$. This means that \cite[Cor. 1]{stoica1998cramer}
\begin{align}
    \bJ^{\frac{1}{2}}\bU(\bJ^{\frac{1}{2}}\bU)^\dagger 
    &= \bI - \bJ^{-\frac{1}{2}}\bF(\bJ^{-\frac{1}{2}}\bF)^\dagger\\
    &= \bI - \bJ^{-\frac{1}{2}} \bF (\bF^\top\bJ^{-1}\bF)^\dagger \bF^\top \bJ^{-\frac{1}{2}},
\end{align}
and the constrained bound therefore becomes
\begin{align}
    &\bJ^{-\frac{1}{2}} \big(\bI - \bJ^{-\frac{1}{2}} \bF (\bF^\top\bJ^{-1}\bF)^\dagger \bF^\top \bJ^{-\frac{1}{2}}\big) \bJ^{-\frac{1}{2}} \label{no_bias_term} \\
    & \qquad\qquad\qquad\qquad = \bJ^{-1} -
    \underbrace{\bJ^{-1} \bF (\bF^\top\bJ^{-1}\bF)^\dagger \bF^\top \bJ^{-1}}_{\text{reduction}}. \nonumber
\end{align}
(Again, the bias can be straightforwardly incorporated.)

\subsection{Remarks on the Column Space Condition}

The column space condition in \eqref{range_ccrb_arbitrary} can be interpreted as follows: if a bias function fails to fulfill it for some $\bU$ whose columns are in the tangent cone, an estimator with such bias does not exist and any constrained bound becomes immaterial. 

If $\bJ > \zero$, then \eqref{range_ccrb_arbitrary} always holds irrespective of the bias, hence constrained bounds are always meaningful. Indeed, as $\col\bJ = \bbR^k$, we trivially have that
\begin{align}
    \col \! \left(\bigg(\bI + \frac{\partial \bb}{\partial \btheta}\bigg)^{\!\!\top}\right) \subset \col \bJ
\end{align}
where, applying \eqref{axler}, the right-hand side can be replaced with $\col \bJ^{\scriptscriptstyle 1/2}$; multiplying both sides by $\bU^\top$, one obtains
\begin{align}
    \col \! \left(\bU^\top\bigg(\bI + \frac{\partial \bb}{\partial \btheta}\bigg)^{\!\!\top}\right) \subset \col \! \big( \bU^\top\bJ^{\frac{1}{2}} \big),
\end{align}
where the right-hand side is identical to $\col \! \big( \bU^\top\bJ\bU \big)$.

If $\bJ$ is singular, however, one would in principle have to check \eqref{range_ccrb_arbitrary} for every $\bU$ to determine whether the constrained bound is meaningful for some given bias.
The following theorem obviates the need for testing every $\bU$. It turns out that it suffices to test a
unique $\bPi$, which is the projection matrix onto $\vspan \cT_\Theta(\btheta)$, the subspace spanned by the vectors in the cone.
(For a manifold, the tangent cone and its span are identical as $\cT_\Theta(\btheta)$ itself forms a vector space. However, in other cases they are different.)

\begin{theorem}
\label{theorem:range_equivalence}
The following are equivalent:
\begin{enumerate}
    \item[(A)] The column space condition \eqref{range_ccrb_arbitrary} holds for every $\bU$ whose columns are in $\cT_\Theta(\btheta)$.
    \item[(B)] It holds that \begin{align}
        \col \! \left(\bPi \, \bigg(\bI + \frac{\partial \bb}{\partial \btheta}\bigg)^{\!\!\top}\right) \subset \col ( \bPi\bJ\bPi ). \label{range_ccrb_projection}
        \end{align}
    \item [(C)] The column space condition \eqref{range_ccrb_arbitrary} holds for every $\bU$ whose columns are in $\vspan \cT_\Theta(\btheta)$.
\end{enumerate}
\end{theorem}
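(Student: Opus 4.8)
The plan is to reduce all three column-space conditions to a single ``linearized'' form and then exploit one elementary closure property. Write $\bM = (\bI + \frac{\partial \bb}{\partial \btheta})^\top$ and let $\bJ^{1/2}$ be the symmetric square root of $\bJ$, which exists since $\bJ\geq\zero$. The engine is the reduction already recorded via \eqref{axler}: applying $\col\bA^\top = \col(\bA^\top\bA)$ with $\bA=\bJ^{1/2}\bV$ gives, for every matrix $\bV$ with $k$ rows,
\[
    \col \! \big(\bV^\top\bJ\bV\big) = \col \! \big(\bV^\top\bJ^{1/2}\big) .
\]
Thus each condition in the statement is equivalent to one of the form $\col(\bV^\top\bM)\subset\col(\bV^\top\bJ^{1/2})$, i.e. to the existence of a matrix $\bK$ with $\bV^\top\bM = \bV^\top\bJ^{1/2}\bK$. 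The crucial structural fact I would isolate is that such a factorization is preserved under left multiplication: if $\bV^\top\bM=\bV^\top\bJ^{1/2}\bK$, then $(\bV\bX)^\top\bM = \bX^\top(\bV^\top\bJ^{1/2})\bK = (\bV\bX)^\top\bJ^{1/2}\bK$ for every $\bX$. This handles the possibly singular $\bJ$ and the arbitrary bias uniformly.

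With this in hand, I would prove the two equivalences $(A)\Leftrightarrow(C)$ and $(B)\Leftrightarrow(C)$, treating (C) as the hub. The implication $(C)\Rightarrow(A)$ is immediate because $\cT_\Theta(\btheta)\subset\vspan\cT_\Theta(\btheta)$, so any $\bU$ admissible for (A) is admissible for (C). For the converse $(A)\Rightarrow(C)$, I would first extract from the cone itself a matrix $\bU_0$ whose columns lie in $\cT_\Theta(\btheta)$ and span $\vspan\cT_\Theta(\btheta)$ (a maximal linearly independent subset of tangent vectors). Condition (A) applied to the single matrix $\bU_0$ yields $\bU_0^\top\bM = \bU_0^\top\bJ^{1/2}\bK$. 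Any $\bU$ admissible for (C) has columns in $\col\bU_0$, hence factors as $\bU=\bU_0\bX$; left multiplying by $\bX^\top$ and invoking the preservation property gives $\bU^\top\bM=\bU^\top\bJ^{1/2}\bK$, which is exactly (C).

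For the second equivalence, $(C)\Rightarrow(B)$ follows by taking $\bU=\bPi$: its columns span $\vspan\cT_\Theta(\btheta)$, so it is admissible for (C), and since $\bPi^\top=\bPi$ the resulting inclusion is precisely \eqref{range_ccrb_projection}. For $(B)\Rightarrow(C)$, I would use that every $\bU$ admissible for (C) satisfies $\bPi\bU=\bU$, equivalently $\bU^\top\bPi=\bU^\top$; writing (B) in linearized form as $\bPi\bM=\bPi\bJ^{1/2}\bL$ and left multiplying by $\bU^\top$ gives $\bU^\top\bM = \bU^\top\bPi\bJ^{1/2}\bL = \bU^\top\bJ^{1/2}\bL$, which is (C). Chaining the two equivalences establishes the mutual equivalence of (A), (B), and (C).

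I expect the only genuinely delicate point to be the basis-extraction step in $(A)\Rightarrow(C)$. One must be certain that $\vspan\cT_\Theta(\btheta)$, which is in general strictly larger than the cone and need not itself be a subspace, nonetheless admits a basis consisting entirely of tangent vectors, so that condition (A)---stated only for $\bU$ with columns in $\cT_\Theta(\btheta)$---can actually be brought to bear on a matrix whose columns generate the full span. Everything else is the bookkeeping of the square-root reduction \eqref{axler} together with left-multiplication invariance, and neither step invokes any nonsingularity or unbiasedness assumption.
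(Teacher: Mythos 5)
Your proof is correct and follows essentially the same route as the paper's: the same square-root linearization via \eqref{axler}, the same left-multiplication of column-space factorizations, and the same key ingredient of a basis of $\vspan \cT_\Theta(\btheta)$ consisting of tangent vectors --- your maximal-independent-subset construction is exactly the paper's Lemma~\ref{lemma:existence_V}. The only differences are cosmetic: you route through (A)$\,\Leftrightarrow\,$(C) and (B)$\,\Leftrightarrow\,$(C) with the factorization $\bU = \bU_0\bX$, whereas the paper runs the cycle (C)$\,\Rightarrow\,$(A)$\,\Rightarrow\,$(B)$\,\Rightarrow\,$(C) using $\bPi = (\bV^\dagger)^\top\bV^\top$ for (A)$\,\Rightarrow\,$(B); also, note that $\vspan\cT_\Theta(\btheta)$ is always a subspace (it is the cone itself that need not be one), a wording slip that does not affect your argument.
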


\begin{proof}
See App. \ref{ICREA}.
\end{proof}

Also noteworthy is that, when the estimator is unbiased, \eqref{range_ccrb_arbitrary} becomes
\begin{align}
    \col \bU^\top \subset \col \! \big( \bU^\top\bJ\bU \big) \label{range_ccrb_full_rank}
\end{align}
and, as the inclusion in the reverse direction holds from
\begin{align}
     \col \! \big( \bU^\top\bJ\bU \big) =  \col \! \big( \bU^\top\bJ^{\frac{1}{2}} \big) \subset \col \bU^\top,
\end{align}
the two column spaces are found to be equal.
Under the proviso that $\bU$ is of full column rank, \eqref{range_ccrb_arbitrary} then reduces, for unbiased estimators, to the nonsingularity of $\bU^\top\bJ\bU$. This had been proved in \cite[Proposition 1]{stoica1998cramer} for the more restrictive case of $\bU$ having orthonormal columns.


\subsection{Remarks on Transformation of Parameters}
Consider
$
    \Theta = \{\btheta = g(\brho): \brho\in\bbR^d\}
$
where $g:\bbR^d \rightarrow \bbR^k$ is a differentiable function.
Letting $\bU = \frac{\partial \btheta}{\partial \brho}$, a constrained bound for $\btheta$ equals the unconstrained bound for $\brho$, namely
\begin{align}
    \frac{\partial \btheta}{\partial \brho}\Bigg( \! \bigg(\frac{\partial \btheta}{\partial \brho}\bigg)^{\!\!\top}\bJ_{\btheta}\frac{\partial \btheta}{\partial \brho}\Bigg)^{\!\dagger}\bigg(\frac{\partial \btheta}{\partial \brho}\bigg)^{\!\!\top}=\frac{\partial \btheta}{\partial \brho}\bJ_{\brho}^\dagger\bigg(\frac{\partial \btheta}{\partial \brho}\bigg)^{\!\!\top},
\end{align}
where the equality follows by applying the chain rule to
\begin{align}
    \bigg(\frac{\partial \btheta}{\partial \brho}\bigg)^{\!\!\top}\bJ_{\btheta}\frac{\partial \btheta}{\partial \brho} 
    &= \bigg(\frac{\partial \btheta}{\partial \brho}\bigg)^{\!\!\top}\bbE \! \left[\bigg(\frac{\partial L}{\partial \btheta}\bigg)^{\!\!\top} \frac{\partial L}{\partial \btheta}\right]\frac{\partial \btheta}{\partial \brho}  \\
    &= \bbE \! \left[\bigg(\frac{\partial L}{\partial \btheta}\frac{\partial \btheta}{\partial \brho}\bigg)^{\!\!\top} \frac{\partial L}{\partial \btheta}\frac{\partial \btheta}{\partial \brho}\right]  \\
    &= \bbE \! \left[ \bigg(\frac{\partial L}{\partial \brho}\bigg)^{\!\!\top} \frac{\partial L}{\partial \brho}\right]  \\
    &= \bJ_{\brho}.
\end{align}
This is consistent with the unconstrained formula under transformation in \cite[Ch. 3.8]{kay1993fundamentals} (see also \cite{yingwei2005regularity, huang2014putting}).
The converse is also possible: with care, the constrained bound can be derived from the unconstrained formula under transformation \cite{moore2007constrained}.

\section{The Constrained CRB}
\label{sec:monotonicity}

Thm. \ref{theorem:crbs} holds for any $\bU$ whose columns are in $\cT_\Theta(\btheta)$, but the bound does depend on the choice of $\bU$. This section establishes the best such bound in the sense of the Loewner order. Although the Loewner order is a partial order, our result demonstrates that there exists a single best bound, which is therefore the constrained CRB.
An enabling result in the derivations is the following.

\begin{lemma}
\label{lemma:existence_V}
There exist $\{\bv_1, \ldots, \bv_d\} \subset \cT_\Theta(\btheta)$ that are a basis of $\vspan \cT_\Theta(\btheta)$.
\end{lemma}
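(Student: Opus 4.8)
The plan is to recognize that this lemma is, at heart, the elementary linear-algebra fact that any spanning set of a finite-dimensional subspace contains a basis, applied to the cone $\cT_\Theta(\btheta)$ viewed as a spanning set of $\vspan \cT_\Theta(\btheta)$. Concretely, I would set $d \equiv \dim \vspan \cT_\Theta(\btheta)$ and construct the desired vectors $\bv_1, \ldots, \bv_d$ greedily, one at a time, always selecting them from \emph{within} the cone so that the resulting basis lies in $\cT_\Theta(\btheta)$ as the statement demands.

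The construction proceeds by induction. Suppose that, for some $j$ with $0 \le j < d$, linearly independent vectors $\bv_1, \ldots, \bv_j \in \cT_\Theta(\btheta)$ have already been chosen (for $j=0$ this is the empty selection). Since $j < d$, the subspace $\vspan\{\bv_1, \ldots, \bv_j\}$ has dimension $j$ and is therefore a \emph{proper} subspace of $\vspan \cT_\Theta(\btheta)$. I would then argue that the cone must contain a vector outside this proper subspace: were every element of $\cT_\Theta(\btheta)$ to lie in $\vspan\{\bv_1, \ldots, \bv_j\}$, then the span of the cone would itself be contained in $\vspan\{\bv_1, \ldots, \bv_j\}$, contradicting the strict dimension inequality. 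Picking any $\bv_{j+1} \in \cT_\Theta(\btheta) \setminus \vspan\{\bv_1, \ldots, \bv_j\}$ extends the collection while preserving linear independence.

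Iterating this step raises the dimension of the selected span by exactly one each time and must halt after precisely $d$ steps, since that dimension cannot exceed $d = \dim \vspan \cT_\Theta(\btheta)$. The outcome is a linearly independent family $\{\bv_1, \ldots, \bv_d\} \subset \cT_\Theta(\btheta)$ whose span is a $d$-dimensional subspace of $\vspan \cT_\Theta(\btheta)$ and hence coincides with it, i.e., a basis of $\vspan \cT_\Theta(\btheta)$ drawn entirely from the cone. The only step warranting any care — and thus the closest thing to an obstacle — is the existence claim for $\bv_{j+1}$, but this follows immediately from the definition of $\vspan$ as the smallest subspace containing the cone, so no genuine difficulty arises. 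Notably, the argument uses nothing about the conical or tangential structure of $\cT_\Theta(\btheta)$ beyond its being a subset of $\bbR^k$, which is precisely why the conclusion holds for arbitrary constraint sets $\Theta$.
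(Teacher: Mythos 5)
Your proof is correct and follows essentially the same route as the paper's (App.~\ref{bombers}): a greedy selection of vectors from $\cT_\Theta(\btheta)$ outside the span of those already chosen, with termination guaranteed by finite dimensionality and the spanning property read off from the exhaustion of the cone. If anything, your initialization from the empty selection (so that $\bv_1 \notin \vspan\emptyset = \{\zero\}$) is marginally tidier than the paper's unconditional first pick, which could in principle select $\bv_1 = \zero \in \cT_\Theta(\btheta)$.
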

The proof of this lemma, including a simple procedure to construct $\{\bv_1, \ldots, \bv_d\}$, is furnished in App. \ref{bombers}.
Then, letting $\bV = [\bv_1\, \cdots \, \bv_d]$, the projection matrix onto $\vspan \cT_\Theta(\btheta)$ can be written as $\bPi = \bV\bV^\dagger$.

\subsection{Nonsingular Fisher Information Matrix}


The decomposition in \eqref{ccrb_decomposition}, valid for nonsingular $\bJ$,
has several consequences:
\begin{enumerate}
    \item The matrix function in \eqref{matrix_function_without_bias} depends only on the column space of $\bW$, and not on its detailed structure. The middle matrix projects onto $\col (\bJ^{\frac{1}{2}}\bW)$ and the other two matrices do not depend on $\bW$.
    This means that it depends only on 
    \begin{align}
        \col ( \bJ^{\frac{1}{2}}\bW) = \bJ^{\frac{1}{2}} \col \bW,
    \end{align}
    which is a function of $\col \bW$. An alternative derivation of this property is given in \cite[Lemma 5]{gorman1990lower}.
    \item The matrix function in \eqref{matrix_function_without_bias} is monotonically increasing in $\col \bW$ (with respect to the set inclusion). Indeed, if $\col \bW_1 \subset \col \bW_2$, it holds that $\bJ^{\frac{1}{2}}\bW_1(\bJ^{\frac{1}{2}}\bW_1)^\dagger \leq \bJ^{\frac{1}{2}}\bW_2(\bJ^{\frac{1}{2}}\bW_2)^\dagger$ and thus
    \begin{align}
        \bW_1(\bW_1^\top \bJ \bW_1)^{\dagger}\bW_1^\top \leq \bW_2(\bW_2^\top \bJ \bW_2)^{\dagger}\bW_2^\top. \label{monotonicity_nonsingular}
    \end{align}
    \item For $\bU$ whose columns are in $\cT_\Theta(\btheta)$ and $\bV$ resulting from Lemma~\ref{lemma:existence_V},
    \begin{align}
        \bU(\bU^\top \bJ \bU)^{\dagger}\bU^\top \leq \bV(\bV^\top \bJ \bV)^{\dagger}\bV^\top
        \label{bombers0}
    \end{align}
    from $\col \bU \subset \vspan \cT_\Theta (\btheta) = \col\bV$. Recalling that the columns of $\bV$ are in $\cT_\Theta(\btheta)$ by construction, the bound obtained from Thm.~\ref{theorem:crbs}
    with $\bU=\bV$ subsumes all other possible bounds, meaning that
    \begin{align}
    \label{Manresa1}
      \bC_{\btheta} \geq
      \bigg(\bI + \frac{\partial \bb}{\partial \btheta}\bigg)\bV(\bV^\top \bJ \bV)^{\dagger}\bV^\top\bigg(\bI + \frac{\partial \bb}{\partial \btheta}\bigg)^{\!\! \top}
    \end{align}
    with the right-hand side being the constrained CRB.
\end{enumerate}    

Since $\col \bPi = \col \bV$, \eqref{Manresa1} can be recast as
    \begin{align}
    \label{Manresa2}
      \bC_{\btheta} \geq  \bigg(\bI + \frac{\partial \bb}{\partial \btheta}\bigg)\bPi(\bPi^\top \bJ \bPi)^{\dagger}\bPi^\top\bigg(\bI + \frac{\partial \bb}{\partial \btheta}\bigg)^{\!\! \top} ,
    \end{align}
dependent only on $\vspan \cT_\Theta (\btheta)$.
As per the foregoing derivation, this constrained CRB requires $\bJ > \zero$, but this condition is lifted later in this section.



\begin{example}
\label{denoising}
    Given the observation
    \begin{align}
    \by = \btheta + \bw, \label{awgn_model}
\end{align}
where $\bw\sim \cN(\zero,\sigma^2\bI)$, consider the task of
estimating $\btheta\in\Theta$ from $\by$, i.e., denoising.
The Fisher information matrix is
\begin{align}
    \bJ = \frac{1}{\sigma^2}\bI,
\end{align}
and the constrained CRB for unbiased estimators is therefore
\begin{align}
\label{TWC}
 \sigma^2\bPi(\bPi^\top\bI\bPi)^{\dagger}\bPi^\top = \sigma^2\bPi.
\end{align}
\end{example}

\begin{example}
Now let
\begin{align}
    \by = g(\brho) + \bw,
\end{align}
where $\brho\in\bbR^d$ is the vector of the parameters of interest and $\bw = [    w_1 \, \cdots \, w_k ]^\top \sim \cN(\zero,\sigma^2\bI)$.
The performance of the estimator $\hat{\brho}$ is quantified by the MSE matrix in relation to the CRB, $\bJ_{\brho}$.
For large $d$, though, it is customary to compare their diagonal entries instead of the whole matrix because:
\begin{enumerate}
    \item There are only $d$ diagonal entries whereas there are $d^2$ matrix entries.
    \item These diagonal entries represents the variances of each component-wise estimator.
    \item Albeit with loss in information by discarding the off-diagonal entries, both matrices are sure to be identical if their diagonal entries are ($\bM = \zero \Leftrightarrow \tr\bM = 0$).
\end{enumerate}

For some problems, say multidimensional polynomial estimation \cite{do2025multidimensional}, $d$ can be so large that even comparing $d$ diagonal entries is overwhelming. Then, the reconstruction MSE, 
\begin{align}
    \bbE\big[\|g(\hat{\brho})-g(\brho)\|^2\big] ,
\end{align}
is a convenient scalar measure and its constrained CRB
for $\Theta = \{\btheta = g(\brho): \brho\in\bbR^d\}$ adopts, recalling \eqref{TWC},
the very simple form
\begin{align}
    \sigma^2 \tr \bPi &= \sigma^2 \dim (\vspan \cT_{\Theta}(\btheta)) .\label{ccrb_denoising}
\end{align}
\end{example}

\subsection{Arbitrary Fisher Information Matrix}
\label{sec:arbitrary_fisher_information_matrix}

When $\bJ$ is not restricted to being nonsingular,
one might expect that an inequality analogous to \eqref{constraints_lower_crb}, specifically
\begin{align}
    \bW(\bW^\top\bJ\bW)^\dagger\bW^\top \leq \bJ^\dagger, \label{constraints_may_not_lower_crb}
\end{align}
would hold for any $\bW$ having $k$ rows. However, if $\bJ$ is singular, one can find counterexamples, say
\begin{align}
    \bW = \begin{bmatrix}
        1 & 0 \\ 0 & 0
    \end{bmatrix}\qquad\quad
    \bJ = \begin{bmatrix}
        1 & 1 \\ 1 & 1
    \end{bmatrix} \label{counterexample}
\end{align}
for which
\eqref{constraints_may_not_lower_crb} becomes
\begin{align}
    \begin{bmatrix}
        1 & 0 \\ 0 & 0
    \end{bmatrix}
    \leq
    \frac{1}{4}\begin{bmatrix}
        1 & 1 \\ 1 & 1
    \end{bmatrix},
\end{align}
which fails to hold (see \cite[p. 5535]{li2012interpretation} for another example). 

Rather, 
what does hold given $\col \bW_1 \subset \col \bW_2$
is \eqref{monotonicity_nonsingular},
only with the condition $\bW_2^\top\bJ\bW_2 >\zero$ in lieu of $\bJ > \zero$ whenever $\bJ$ is singular 
(see App. \ref{DraR}).

For $\bW_1$ and $\bW_2$ satisfying $\col \bW_1 = \col \bW_2$ as well as $\bW_1^\top\bJ\bW_1>\zero$ and $\bW_2^\top\bJ\bW_2>\zero$, then,
\eqref{monotonicity_nonsingular} yields two-sided inequalities, or equivalently
\begin{align}
    \bW_1(\bW_1^\top\bJ\bW_1)^\dagger\bW_1^\top = \bW_2(\bW_2^\top\bJ\bW_2)^\dagger\bW_2^\top. \label{inequality_implies_range_dependency}
\end{align}
This property does not hold without the conditions $\bW_1^\top\bJ\bW_1>\zero$ and $\bW_2^\top\bJ\bW_2>\zero$; for instance, for
\begin{align}
    \bJ = \begin{bmatrix}
        1 & 0 \\
        0 & 0 \\
    \end{bmatrix}\quad
    \bW_1 = \begin{bmatrix}
        1 & 0\\
        0 & 1
    \end{bmatrix}\quad
    \bW_2 = \begin{bmatrix}
        1 & 1\\
        0 & 1
    \end{bmatrix}
\end{align}
satisfying $\col \bW_1 = \col \bW_2$, \eqref{inequality_implies_range_dependency} becomes
\begin{align}
    \begin{bmatrix}
        1 & 0 \\
        0 & 0 
    \end{bmatrix} = \begin{bmatrix}
        1 & \frac{1}{2} \\
        \frac{1}{2} & \frac{1}{4}
    \end{bmatrix},
\end{align}
which is not true.
Bottom line, $\bW(\bW^\top\bJ\bW)^\dagger\bW^\top$
depends only on $\col \bW$ provided that $\bW^\top\bJ\bW > \zero$. 

The above suggests a natural greedy algorithm to find the constrained CRB by appending columns to $\bU$ until $\bU^\top\bJ\bU$ becomes singular. However, there is still no guarantee that the resulting bound is the best one as there are infinitely many greedy paths. The following theorem gives a definitive answer to the best possible bound. This becomes possible by explicitly re-introducing the bias term into \eqref{monotonicity_nonsingular}, and the column space condition in \eqref{range_ccrb_arbitrary} plays a central role in establishing this result.

\begin{theorem}
\label{theorem:ccrb_monotonicity}
Provided that \eqref{range_ccrb_projection} holds,
the function
\begin{align}
    \bW \mapsto \bigg(\bI + \frac{\partial \bb}{\partial \btheta}\bigg)\bW(\bW^\top \bJ \bW)^{\dagger}\bW^\top\bigg(\bI + \frac{\partial \bb}{\partial \btheta}\bigg)^{\!\! \top} \label{matrix_function_with_bias}
\end{align}
whose domain is the set of matrices with $\col \bW =\vspan \cT_\Theta(\btheta)$, is monotonic with respect to $\col\bW$, i.e.,
\begin{align}
    &\bigg(\bI + \frac{\partial \bb}{\partial \btheta}\bigg)\bW_1(\bW_1^\top \bJ \bW_1)^{\dagger}\bW_1^\top\bigg(\bI + \frac{\partial \bb}{\partial \btheta}\bigg)^{\!\! \top} \label{ccrb_monotonicity}\\
    &\qquad\qquad\quad \leq \bigg(\bI + \frac{\partial \bb}{\partial \btheta}\bigg)\bW_2(\bW_2^\top \bJ \bW_2)^{\dagger}\bW_2^\top\bigg(\bI + \frac{\partial \bb}{\partial \btheta}\bigg)^{\!\!\top} \nonumber
\end{align}
if $\col \bW_1 \subset \col \bW_2$.
\end{theorem}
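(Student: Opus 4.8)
The plan is to reduce this matrix-valued monotonicity to an elementary comparison of orthogonal projections. Since $\col\bW_1\subseteq\col\bW_2$, I would first write $\bW_1=\bW_2\bT$ for some $\bT$ (e.g.\ $\bT=\bW_2^\dagger\bW_1$). Abbreviating $\bG=(\bI+\frac{\partial\bb}{\partial\btheta})\bW_2$ and $\bS=\bW_2^\top\bJ\bW_2$, the value of \eqref{matrix_function_with_bias} at $\bW_2$ is $\bG\bS^\dagger\bG^\top$, while at $\bW_1$ it is $\bG\bT(\bT^\top\bS\bT)^\dagger\bT^\top\bG^\top$, because $\bW_1^\top\bJ\bW_1=\bT^\top\bS\bT$ and $(\bI+\frac{\partial\bb}{\partial\btheta})\bW_1=\bG\bT$. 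The target \eqref{ccrb_monotonicity} thus becomes the single inequality $\bG\bT(\bT^\top\bS\bT)^\dagger\bT^\top\bG^\top\leq\bG\bS^\dagger\bG^\top$.

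Next I would extract what the hypothesis buys. As the columns of $\bW_2$ lie in $\vspan\cT_\Theta(\btheta)$, the implication (B)$\Rightarrow$(C) of Thm.~\ref{theorem:range_equivalence} converts the assumed \eqref{range_ccrb_projection} into the column-space condition \eqref{range_ccrb_arbitrary} for $\bW_2$, i.e.\ $\col\bG^\top=\col(\bW_2^\top(\bI+\frac{\partial\bb}{\partial\btheta})^\top)\subseteq\col(\bW_2^\top\bJ\bW_2)=\col\bS$. This is precisely the inclusion that will legitimize the square-root factorization below, and it is where the hypothesis earns its keep: absent it, the analogue \eqref{constraints_may_not_lower_crb} fails and the monotonicity genuinely breaks, as the counterexample around \eqref{counterexample} shows.

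With $\col\bG^\top\subseteq\col\bS=\col\bS^{\frac12}$ in hand, I would factor $\bG^\top=\bS^{\frac12}\bR$ with $\bR=(\bS^{\frac12})^\dagger\bG^\top$, so that $\col\bR\subseteq\col\bS$. Setting $\bN=\bS^{\frac12}\bT$ gives $\bT^\top\bS\bT=\bN^\top\bN$ and $\bG\bT=\bR^\top\bN$, whereupon the two sides collapse to $\bR^\top\bigl(\bN(\bN^\top\bN)^\dagger\bN^\top\bigr)\bR$ and $\bR^\top\bigl(\bS^{\frac12}\bS^\dagger\bS^{\frac12}\bigr)\bR$. The bracketed factors are orthogonal projections onto $\col\bN$ and onto $\col\bS$ respectively, and since $\col\bN=\col(\bS^{\frac12}\bT)\subseteq\col\bS^{\frac12}=\col\bS$ while $\bR$ is fixed by the projection onto $\col\bS$, I obtain $\bR^\top\bN(\bN^\top\bN)^\dagger\bN^\top\bR\leq\bR^\top\bR=\bG\bS^\dagger\bG^\top$ (indeed $\bN(\bN^\top\bN)^\dagger\bN^\top\leq\bI$ already suffices). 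This is the claimed inequality.

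Finally, because both $\bW_1$ and $\bW_2$ have columns in $\vspan\cT_\Theta(\btheta)$, the hypothesis applies symmetrically; specializing to $\col\bW_1=\col\bW_2$ and invoking the inequality in both directions shows that \eqref{matrix_function_with_bias} is in fact \emph{constant} on the stated domain, which is what isolates the constrained CRB. The step I expect to be the real obstacle is not the projection estimate, which is immediate, but correctly routing \eqref{range_ccrb_projection} through Thm.~\ref{theorem:range_equivalence} to secure $\col\bG^\top\subseteq\col\bS$: this inclusion is exactly the gap between the harmless nonsingular case and the singular case, and it is what makes the factorization $\bG^\top=\bS^{\frac12}\bR$ available in the first place.
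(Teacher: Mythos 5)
Your proof is correct and, in substance, it runs on the same ingredients as the paper's own argument in App.~\ref{SOMMA}; the difference is one of organization rather than of ideas. The paper proves the inequality by a six-step chain on the original matrices: it inserts $\bW_2\bW_2^\dagger$ in front of $\bW_1$ (its fact F1, your $\bW_1=\bW_2\bT$), uses the column-space condition obtained through Thm.~\ref{theorem:range_equivalence} to absorb the projector onto $\col(\bW_2^\top\bJ^{\frac{1}{2}})$---which is exactly your projector onto $\col\bS$, since $\col(\bW_2^\top\bJ^{\frac{1}{2}})=\col(\bW_2^\top\bJ\bW_2)$ by \eqref{axler}---then bounds the inner projection $\bJ^{\frac{1}{2}}\bW_1(\bW_1^\top\bJ\bW_1)^\dagger\bW_1^\top\bJ^{\frac{1}{2}}\leq\bI$ (your $\bN(\bN^\top\bN)^\dagger\bN^\top\leq\bI$) and closes with the Moore--Penrose identity \eqref{moore_penrose_property}. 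What your repackaging buys: by pulling everything back through $\bW_2$ and whitening with $\bS^{\frac{1}{2}}=(\bW_2^\top\bJ\bW_2)^{\frac{1}{2}}$ instead of working with $\bJ^{\frac{1}{2}}$ in the ambient space, the hypothesis \eqref{range_ccrb_projection} is consumed exactly once, to legitimize the factorization $\bG^\top=\bS^{\frac{1}{2}}\bR$, whereas the paper invokes the corresponding fact F2 twice (at the start and end of its chain); your version also lays bare the mechanism, reducing both sides to $\bR^\top(\text{projection})\bR$ comparisons with $\col\bN\subseteq\col\bS$. Two points you gesture at that deserve to be explicit: (i) $\col\bR\subseteq\col\bS$ holds automatically because $\bR=(\bS^{\frac{1}{2}})^\dagger\bG^\top$ and $\col\big((\bS^{\frac{1}{2}})^\dagger\big)=\col\bS$, which is what yields $\bR^\top\bS^{\frac{1}{2}}\bS^\dagger\bS^{\frac{1}{2}}\bR=\bR^\top\bR$ on the right-hand side; and (ii) only $\bW_2$ needs the column-space condition---your routing of \eqref{range_ccrb_projection} through the implication (B)$\Rightarrow$(C) of Thm.~\ref{theorem:range_equivalence} is valid precisely because $\col\bW_2\subseteq\vspan\cT_\Theta(\btheta)$, and this matches the paper, whose proof likewise never imposes the condition on $\bW_1$. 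Your closing observation that the two-sided case $\col\bW_1=\col\bW_2$ forces constancy reproduces consequence 1) listed after the theorem.
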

\begin{proof}
See App. \ref{SOMMA}
\end{proof}



The consequences of Thm. \ref{theorem:ccrb_monotonicity} are as follows:
\begin{enumerate}
\item For $\bW_1$ and $\bW_2$ satisfying $\col \bW_1 = \col \bW_2 \subset \vspan \cT_\Theta(\btheta)$, the theorem gives two-sided inequalities. This implies that \eqref{matrix_function_with_bias} depends on $\bW$ only via $\col \bW$.
\item For $\bU$ whose columns are in $\cT_\Theta(\btheta)$ and $\bV$ resulting from Lemma~\ref{lemma:existence_V},
\begin{align}
    &\bigg(\bI + \frac{\partial \bb}{\partial \btheta}\bigg)\bU(\bU^\top \bJ \bU)^{\dagger}\bU^\top\bigg(\bI + \frac{\partial \bb}{\partial \btheta}\bigg)^{\!\!\top}\\
    &\qquad\qquad\leq \bigg(\bI + \frac{\partial \bb}{\partial \btheta}\bigg)\bV(\bV^\top \bJ \bV)^{\dagger}\bV^\top\bigg(\bI + \frac{\partial \bb}{\partial \btheta}\bigg)^{\!\!\top} \nonumber
\end{align}
from $\col \bU \subset \vspan \cT_\Theta (\btheta) = \col\bV$. This implies that the constrained CRB is given by \eqref{Manresa1} and \eqref{Manresa2} even if $\bJ$ is singular.

\item When $\vspan \cT_\Theta(\btheta) = \bbR^k$, the constrained CRB boils down to its unconstrained counterpart. This generalizes the result in \cite{gorman1990lower}, which proves that the two bounds coincide if no equality constraints are active.
\item The constrained CRB is always smaller than its unconstrained counterpart if the latter is meaningful, i.e., if $\col \! \big( (\bI + \frac{\partial \bb}{\partial \btheta})^{\!\top}\big) \subset \col \bJ$.  

\end{enumerate}

\section{Some Constraint Sets and Their Tangent Cones}
\label{sec:examples}

For a broad class of sets, the tangent cones have been characterized. This section reviews established results and pinpoints their connections with the constrained CRB literature.

\begin{table*}
\setlength{\tabcolsep}{3pt}
\addtolength{\leftskip} {-2cm}
\addtolength{\rightskip} {-2cm}
\centering
\begin{threeparttable}
\caption{Tangent Space of Some Smooth Manifolds}
\label{table:manifolds}
\begin{tabular}{ll|l|l} 
\toprule
Manifold & & Tangent space & Reference \\
\midrule
Sphere & $\{\btheta\in\bbR^k: \|\btheta\|=1\}$ & $\{\bu:\btheta^\top\bu = 0\}$ & \cite[Eq. 3.3]{boumal2023introduction}
\\
Stiefel manifold & $\{\bX\in\bbR^{p\times q}: \bX^\top\bX = \bI\}$ & $\{\bU: \bX^\top\bU + \bU^\top\bX = \zero\}$ &
\cite[Eq. 7.17]{boumal2023introduction}
\\
Orthogonal group & $\{\bX\in\bbR^{p\times p}: \bX^\top\bX = \bI\}$ & $\{\bU: \bX^\top\bU + \bU^\top\bX = \zero\}$ & Special case 
\\
Special orthogonal group & $\{\bX\in\bbR^{p\times p}: \bX^\top\bX = \bI, \det \bX = 1\}$ & $\{\bU: \bX^\top\bU + \bU^\top\bX = \zero\}$ & \cite[Eq. 7.37]{boumal2023introduction}
\\
Fixed-rank matrices\tnote{*} & $\{\bX\in\bbR^{p\times q}: \rank \bX = r\}$ & $\bigg\{\begin{bmatrix}
     \bU_r & \bU_{p-r}
 \end{bmatrix} \begin{bmatrix}
     * & *\\
     * & \zero
 \end{bmatrix}\begin{bmatrix}
     \bV_r^\top \\ \bV_{q-r}^\top
 \end{bmatrix} \bigg\}$  & 
 \cite[Eq. 7.48]{boumal2023introduction}
\\
Fixed-rank positive-definite matrices\tnote{$\dagger$} & $\{\bX\in\bbR^{p
\times p}:\bX\geq \zero, \rank X = r\}$ & $\bigg\{\begin{bmatrix}
     \bU_r & \bU_{p-r}
 \end{bmatrix} \begin{bmatrix}
     * & *\\
     * & \zero
 \end{bmatrix}\begin{bmatrix}
     \bU_r^\top \\ \bU_{p-r}^\top
 \end{bmatrix} \bigg\}$ & \cite[Prop. 2.2]{vandereycken2009embedded}
\\
Positive-definite matrices & $\{\bX\in\bbR^{p\times p}: \bX > \zero\}$ & $\{\bU: \bU= \bU^\top\}$ & Special case
\\
\bottomrule
\end{tabular}
\begin{tablenotes} \footnotesize
\item[*] $\bU_r$ and $\bV_r$ are the singular vectors corresponding to nonzero singular values, and the entries with asterisks can be arbitrarily chosen.
\item[$\dagger$] $\bU_r$ are the eigenvectors corresponding to nonzero eigenvalues, and the entries with the asterisks must be such that the middle matrix is symmetric.
\end{tablenotes}
\end{threeparttable}
\end{table*}

\subsection{Smooth Manifolds}
A vast majority of previous studies address equality constraints of the form,
\begin{align}
    h(\btheta) = \zero, \label{equality_constraint}
\end{align}
where $h:\bbR^k\rightarrow \bbR^{k-m}$ is a continuously differentiable function with a full-rank gradient matrix. 
Although these studies consider the global constraint in \eqref{equality_constraint}, owing to the  local nature of the constrained CRB
their derivations apply to smooth manifolds, locally represented by \eqref{equality_constraint}.

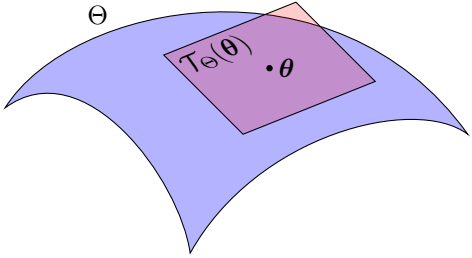
\begin{figure}
    \centering
    \begin{tikzpicture}[x=1, y=1]

    \draw (110,95) node[]{$\Theta$};

    \path[draw=black, fill=blue, fill opacity = 0.3]
        (75,60) .. controls (105,82) and (141,33) .. (145,5)
        .. controls (170,50) and (230,65) .. (250,50)
        .. controls (200,120) and (100,100) .. (75,60)
        --cycle;

    \filldraw[fill=red, fill opacity = 0.2] (185,100) -- (215,70) -- (165,50) -- (135,80) -- cycle ;

    \draw (155,80) node[rotate=22, xslant = -0.5]{$\cT_{\Theta}(\btheta)$};

    \filldraw (175,75) circle (1pt) node[right]{$\btheta$};

    \end{tikzpicture}
    \caption{Tangent cone to $\Theta$ at $\btheta$, which for a smooth manifold is identical to the tangent space.}
    \label{fig:tangent_space}
\end{figure}

A welcome property of manifolds
is that their tangent cone boils down to the tangent space, i.e., the set of all derivatives of smooth curves passing through the point \cite[Example 6.8]{rockafellar1998variational} (see Fig.~\ref{fig:tangent_space}). Put differently, the tangent cone at $\btheta$ is the null space of the gradient matrix, namely
\begin{align}
    \cT_{\Theta}(\btheta) = \bigg\{\bu\in\bbR^k: \frac{\partial h}{\partial \btheta} \bu = \zero\bigg\}. \label{tangent_space_equality}
\end{align}
Table \ref{table:manifolds} lists common smooth manifolds and their tangent spaces.
Notably, it is not straightforward to represent certain manifolds, such as the manifold of fixed-rank matrices, in the form of the equality constraint in \eqref{equality_constraint}.



\begin{example}
Consider the orthogonal group in three dimensions \cite[Sec. IV-B]{chepuri2014rigid},
\begin{align}
    \big\{\bX = \begin{bmatrix}
    \bx_1 & \bx_2 & \bx_3
\end{bmatrix} \in\bbR^3: \bX^\top\bX=\bI\big\},
\end{align}
where, as it is a matrix, the parameter is denoted by $\bX$ rather than $\btheta$.
Vectorization of the constraint, $\bX^\top\bX=\bI$, to put it in the form of \eqref{equality_constraint}, gives
\begin{align}
    \begin{bmatrix}
        \bx_1^\top\bx_1-1\\
        \bx_2^\top\bx_1\\
        \bx_3^\top\bx_1\\
        \bx_2^\top\bx_2-1\\
        \bx_3^\top\bx_2\\
        \bx_3^\top\bx_3-1
    \end{bmatrix}
    = 
    \begin{bmatrix}
        0\\
        0\\
        0\\
        0\\
        0\\
        0
    \end{bmatrix},
\end{align}
where redundant constraints have been removed.
Its gradient matrix is
\begin{align}
    \begin{bmatrix}
        2\bx_1^\top & \zero^\top & \zero^\top\\
        \bx_2^\top & \bx_1^\top & \zero^\top\\
        \bx_3^\top & \zero^\top & \bx_1^\top\\
        \zero^\top & 2\bx_2^\top & \zero^\top\\
        \zero^\top & \bx_3^\top & \bx_2^\top\\
        \zero^\top & \zero^\top & 2\bx_3^\top
    \end{bmatrix}.
\end{align}
Through some tedious computations, one can deduce that
\begin{align}
    \left\{\begin{bmatrix}
        -\bx_2\\
        \bx_1\\
        \zero
    \end{bmatrix},
    \begin{bmatrix}
        -\bx_3\\
        \zero\\
        \bx_1
    \end{bmatrix},
    \begin{bmatrix}
        \zero\\
        -\bx_3\\
        \bx_2
    \end{bmatrix}\right\}
\end{align}
is a basis of the null space of the gradient matrix, i.e., of the tangent space.
Unvectorization gives
\begin{align}
    &\big\{\begin{bmatrix}
    -\bx_2 & \bx_1 & \zero
    \end{bmatrix},
    \begin{bmatrix}
    -\bx_3 & \zero & \bx_1
    \end{bmatrix},
    \begin{bmatrix}
    \zero & -\bx_3 & \bx_2
    \end{bmatrix}\big\}\\
    &=\left\{\bX\begin{bmatrix}
        0 & 1 & 0\\
        -1 & 0 & 0\\
        0 & 0 & 0
    \end{bmatrix}\!\!,
    \bX\begin{bmatrix}
        0 & 0 & 1\\
        0 & 0 & 0\\
        -1 & 0 & 0
    \end{bmatrix}\!\!,
    \bX\begin{bmatrix}
        0 & 0 & 0\\
        0 & 0 & 1\\
        0 & -1 & 0
    \end{bmatrix}\right\}, \nonumber
\end{align}
which is how is presented in Table \ref{table:manifolds}. A similar result for the set of unitary matrices can be found in \cite{jagannatham2004cramer}.
\end{example}

\begin{example}
\label{cold}
Another example, studied in \cite[Sec. III]{tang2011lower}, is the manifold of rank-$r$ matrices\footnote{The set of matrices with rank equal to or less than $r$
is not a manifold.
Nevertheless, its tangent cone has been characterized in \cite{cason2013iterative, schneider2015convergence, olikier2025tangent} as
\begin{align}
    \bigg\{\begin{bmatrix}
     \bU_r & \bU_{p-r}
     \end{bmatrix} \begin{bmatrix}
         * & *\\
         * & \diamond
     \end{bmatrix}\begin{bmatrix}
         \bV_r^\top \\ \bV_{q-r}^\top
     \end{bmatrix} \bigg\}, \nonumber
\end{align}
where the submatrices indicated by asterisks are arbitrary while the submatrix indicated by a diamond has rank equal to or less than $r-\rank \bX$.}
\begin{align}
    \{\bX\in\bbR^{n\times k}: \rank \bX = r\} .
\end{align}
Given the singular value decomposition
\begin{align}
    \bX = \begin{bmatrix}
     \bU_r & \bU_{p-r}
    \end{bmatrix} \begin{bmatrix}
        \bSigma & \zero\\
        \zero & \zero
    \end{bmatrix} \begin{bmatrix}
     \bV_r^\top \\ \bV_{q-r}^\top
    \end{bmatrix},
\end{align}
where $\bU_r$ and $\bV_r$ are the singular vectors corresponding to nonzero singular values whereas $\bU_{p-r}$ and $\bV_{q-r}$ are those corresponding to zero singular values, 
the rank 
is invariant under small perturbations in the direction of
\begin{align}
    &\bU_r\, (\text{any $r\times r$ matrix})\, \bV_r^\top \label{tangent_space_fixed_rank}\\
    &\qquad\text{or }\bU_r\, (\text{any $r\times (q-r)$ matrix})\, \bV_{q-r}^\top \nonumber\\
    &\qquad\text{or }\bU_{p-r}\, (\text{any $(p-r)\times r$ matrix})\, \bV_r^\top. \nonumber
\end{align}
That is, the above are the feasible directions.
This is because the perturbed matrices are
\begin{align}
    &\begin{bmatrix}
     \bU_r & \bU_{p-r}
    \end{bmatrix}\! \begin{bmatrix}
        \bSigma + \epsilon\, (\text{any $r\times r$ matrix}) & \zero\\
        \zero & \zero
    \end{bmatrix}\! \begin{bmatrix}
     \bV_r^\top \\ \bV_{q-r}^\top
    \end{bmatrix} \\
    &\begin{bmatrix}
     \bU_r & \bU_{p-r}
    \end{bmatrix}\! \begin{bmatrix}
        \bSigma & \epsilon\, (\text{any $r\times (q-r)$ matrix})\\
        \zero & \zero
    \end{bmatrix}\! \begin{bmatrix}
     \bV_r^\top \\ \bV_{q-r}^\top
    \end{bmatrix}\\
    &\begin{bmatrix}
     \bU_r & \bU_{p-r}
    \end{bmatrix}\! \begin{bmatrix}
        \bSigma & \zero\\
        \epsilon\, (\text{any $(p-r)\times r$ matrix}) & \zero
    \end{bmatrix}\! \begin{bmatrix}
     \bV_r^\top \\ \bV_{q-r}^\top
    \end{bmatrix},
\end{align}
respectively, where $\epsilon>0$ is a small constant. As the orthogonal matrices on both sides are invertible, the rank of the perturbed matrices is identical to that of the matrices in the middle, which is $r$ for small $\epsilon$.
Altogether, \eqref{tangent_space_fixed_rank} characterizes the feasible directions on which the rank of $\bX$ is preserved, hence the tangent space.


\end{example}

A multidimensional extension of Example \ref{cold} is possible. Unlike for matrices, for tensors several notions of rank exist, each associated with a tensor decomposition. Among these, the set of fixed-rank tensors associated with the Tucker and tensor train decompositions form manifolds. The tangent space is characterized for the Tucker decomposition in \cite[Eq. 2.5]{koch2010dynamical} and for the tensor train one in \cite[Thm.~2]{holtz2012manifolds}.


\subsection{Inequality Constraints}
\label{sec:inequality_constraints}
In \cite{gorman1990lower}, inequality constraints,
\begin{align}
    h(\btheta) \leq \zero, \label{inequality_constraint}
\end{align}
are considered where the inequality is applied in a componentwise manner and $h:\bbR^k\rightarrow \bbR^{k-m}$ is continuously differentiable. As long as the tangent cone has an interior point, its span is $\bbR^k$ and the inequality constraints do not affect the CRB \cite[Lemma 4]{gorman1990lower}.

\begin{figure}
    \centering
    \begin{tikzpicture}[
        scale=1, 
        >=stealth, 
        x={(-0.4 cm,-0.4 cm)}, 
        y={(0.8 cm,-0.1 cm)},
        z={(0 cm,0.8 cm)}
    ]  
        \fill[fill opacity = 0.3, fill = blue] 
            (2,2,0) -- (-2,2,0) -- (-2,-2,0) -- (2,-2,0) -- cycle;
        \fill[fill opacity = 0.3, fill = blue] 
            (0,2,2) -- (0,-2,2) -- (0,-2,-2) -- (0,2,-2) -- cycle;
        \fill[fill opacity = 0.3, fill = blue] 
            (2,0,2) -- (-2,0,2) -- (-2,0,-2) -- (2,0,-2) -- cycle;
        \draw[line width=1pt, ->]
            (-2,0,0) -- (2,0,0) 
            node[above left, yshift = -0.4 cm] {};
        \draw[line width=1pt, ->]
            (0,-2,0) -- (0,2,0)
            node[above left, xshift = 0.3 cm, yshift = 0.1 cm] {};
        \draw[line width=1pt, ->]
            (0,0,-2) -- (0,0,2)
            node[above, xshift = -0.3 cm] {};
    \end{tikzpicture} 
    \caption{The set of sparse vectors for $k=3$ and $s=2$, i.e., $\Theta = \{\btheta\in\bbR^3: \|\btheta\|_0 \leq 2\}$.}
    \label{fig:sparse_set}
\end{figure}


\subsection{Set of Sparse Vectors}
A constrained CRB for the compressed sensing problem is studied in \cite{benhaim2010cramer}, where the associated set of parameters is the set of sparse vectors,
\begin{align}
    \Theta = \{\btheta\in\bbR^k: \|\btheta\|_0 \leq s\}. \label{sparse_set}
\end{align}
It can be verified that its tangent cone is
\begin{align}
    \cT_\Theta (\btheta) = \{\bu: |\supp\bu \cup \supp\btheta| \leq s\}, \label{tangent_cone_sparse_set}
\end{align}
where $\supp$ is the set of the indices whose components are nonzero. Letting $\bee_i$ be the $i$th standard basis vector,
the results in \cite[Thm. 2]{benhaim2010cramer} can be obtained from Thm. \ref{theorem:crbs} via the fact that
\begin{align}
    \{\bee_1, \bee_2, \ldots, \bee_k\} \subset \cT_\Theta(\btheta)
\end{align}
when $|\supp \btheta| < s$ and
\begin{align}
    \cT_\Theta(\btheta) = \{\bu: \supp \bu \subset \supp \btheta\}
\end{align}
when $|\supp \btheta| = s$. Indeed,
\begin{align}
\label{fred}
    \vspan \cT_\Theta(\btheta) = \begin{cases}
        \bbR^k & \text{ if }|\supp \btheta| < s\\
        \{\bu: \supp \bu \subset \supp \btheta\} & \text{ if }|\supp \btheta| = s
    \end{cases}.
\end{align}

\subsection{Remarks}

The Cartesian product naturally appears when there are multiple parameters to estimate \cite{sadler2001bounds, tang2011constrained, zhao2014model, elnakeeb2018line}. 
For smooth manifolds, their Cartesian product is again a smooth manifold and their tangent space is the Cartesian product of their tangent spaces \cite[Prop. 3.20]{boumal2023introduction}. A similar statement holds for tangent cones under a more general setting \cite[Prop. 6.41]{rockafellar1998variational}.

In turn, intersection naturally appears when multiple constraints are imposed \cite{soloveychik2016joint, elnakeeb2018line, meriaux2021matched, halihal2025cramer}.
In such cases, \cite[Thm.~6.42]{rockafellar1998variational} comes in handy.

\section{Summary and Discussion}
\label{sec:discussion}



The CRB on the estimation of a parameter confined to an arbitrary set has been derived. Besides the Fisher information matrix that dictates its unconstrained brethren, the new ingredient in the constrained CRB is the span of the tangent cone to the set at the parameter. The span of the tangent cone therefore fully describes the set as far as the CRB is concerned, and an overview of the tangent cones for a number of common sets as been presented. The constrained CRB is valid for any bias and any Fisher information matrix, singular or not. Conditions have also been provided under which the estimator does not exist, rendering the CRB immaterial.

To finish the paper, and for the sake of completeness, two further aspects that are specific to unbiased constrained estimators are entertained next.

\subsection{Existence of Unbiased Estimators}

A discussion on the existence of unbiased estimators can be found in \cite{somekh2017non}. If the constraint set features an extreme point, an unbiased estimator satisfying $\hat{\btheta}(\by)\in\Theta$ for all $\by$ does not exist under a very mild condition, rendering the CRB immaterial (see App.~\ref{Chelsea}).
A similar conclusion is reached in \cite[Thm. 2]{chepuri2014rigid} for the orthogonal group, the set of all orthogonal matrices. 

The foregoing conclusion critically relies on 
$\hat{\btheta}(\by) \in\Theta$ for all $\by$. One can construct an unbiased constrained estimator by transcending this premise and allowing $\hat{\btheta}$ to live outside the set, even as $\btheta$ is confined to it.

\begin{example}
    Consider denoising on the set 
\begin{align}
    \Theta = \{\btheta\in\bbR^k: \|\btheta\| = 1\}
\end{align}
and consider the estimator
\begin{align}
    \hat{\btheta} = a\frac{\by}{\|\by\|},
\end{align}
which satisfies
\begin{align}
    \bbE\big[\hat{\btheta}\big] &= a \, \bbE\bigg[\frac{\btheta+\bw}{\|\btheta+\bw\|}\bigg]. \label{sphere_example}
\end{align}
It can be seen from sheer symmetry that $\bbE\big[\frac{\btheta+\bw}{\|\btheta+\bw\|}\big]$
is a scalar multiple of $\btheta$, i.e.,
\begin{align}
    \bbE\bigg[\frac{\btheta+\bw}{\|\btheta+\bw\|}\bigg] = \bigg\|\bbE\bigg[\frac{\btheta+\bw}{\|\btheta+\bw\|}\bigg]\bigg\|\,\btheta
\end{align}
where the multiplicative factor is independent of $\btheta$.
Setting $a$ to equal the reciprocal of this scalar multiplication factor gives an unbiased estimator. 
\end{example}


Another workaround is the modification of the notion of unbiasedness; refer to \cite{smith2005covariance, boumal2013intrinsic, nitzan2019cramer}.

\subsection{A Limitation of the Unbiased Constrained CRB}

For a growing number of observations, which in additive-noise models corresponds to a high signal-to-noise ratio, it is generally believed that the bias of nice estimators---those approaching the CRB---tends to vanish and the unbiased CRB becomes fully pertinent.
A theoretical underpinning of this believe is that the maximum-likelihood estimator is asymptotically normal with zero mean and variance identical to the CRB \cite[Thm.~7.1]{kay1993fundamentals}. This has been argued for equality constraints \cite{moore2008maximum}. As evidenced next, though, it does not apply when the constraint set is not smooth.


\begin{example}
\label{flick}
    Consider denoising on the set
\begin{align}
    \Theta = \{\btheta\in\bbR^k: \|\btheta\|_0 \leq 1\}, \label{one_sparse_set}
\end{align}
where the origin is a corner point; this problem has been studied in \cite{jung2011unbiased}. The maximum-likelihood estimator selects the component of $\by$ with the largest absolute value and zeroes out the remaining components. Its MSE at $\btheta = \zero$ is then
\begin{align}
    \bbE\Big[\max_{\ell=1,\ldots,k} w_\ell^2\Big] \sim 2\sigma^2 \log k
\end{align}
as $k\rightarrow \infty$ \cite{iosif2019expected}; this is \emph{lower} than the unbiased constrained CRB, which equals $\sigma^2 k$.
\end{example}

The crux of the matter is that, while the bias does vanish, its gradient 
does not vanish at a corner.
In the case of Example~\ref{flick}, from the invariance of \eqref{one_sparse_set} under scaling it holds that
\begin{align}
    \bb(\btheta,\sigma) = \sigma\bb\big(\tfrac{\btheta}{\sigma},1\big)
\end{align}
and, because of symmetry, the bias at $\btheta$ is proportional to $\btheta$. Altogether, for $\btheta = \theta\bee_\ell$, one can write
\begin{align}
    \bb(\btheta,\sigma) = \sigma u\big(\tfrac{\theta}{\sigma}\big)\bee_\ell
\end{align}
for some function $u: \bbR\rightarrow \bbR$. Although the bias vanishes everywhere as $\sigma\rightarrow 0$, the bias gradient at the origin lingers:
\begin{align}
    \frac{\partial \bb}{\partial \btheta} = u'(0) \bI.
\end{align}
At corner points, therefore, the unbiased constrained CRB is not pertinent.



\section*{Acknowledgment}

Valuable feedback from Andrea Pizzo (UPF) and Stefano Fortunati (Telecom SudParis) is gratefully acknowledged.






\appendices






\section{}
\label{Valencia}

\begin{theorem}
For a differentiable function $f:\bbR^k\rightarrow \bbR^m$ and a set $\Theta$, the directional derivative of $f$ with respect to a direction $\bu\in\cT_\Theta(\btheta)$ at $\btheta$, denoted by
\begin{align}
    \frac{\partial f}{\partial \bu} = \frac{\partial f}{\partial \btheta} \bu,
\end{align}
is completely determined by the restriction of $f$ to $\Theta$.
\end{theorem}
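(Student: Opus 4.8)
The plan is to express the number $\frac{\partial f}{\partial \btheta}\bu$ entirely in terms of values that $f$ takes \emph{on} $\Theta$, so that any two differentiable extensions coinciding on $\Theta$ are forced to yield the same directional derivative along $\bu$. First I would fix $\bu\in\cT_\Theta(\btheta)$ and dispose of the trivial case $\bu=\zero$, for which $\frac{\partial f}{\partial\btheta}\bu=\zero$ regardless of the extension. For $\bu\neq\zero$, invoking Definition \ref{tv} I would select sequences $\{\btheta_i\in\Theta\}$ and $\{\lambda_i>0\}$ with $\btheta_i\to\btheta$ and $\lambda_i(\btheta_i-\btheta)\to\bu$. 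Since $\bu\neq\zero$, one has $\btheta_i\neq\btheta$ for all large $i$ and $\lambda_i\|\btheta_i-\btheta\|\to\|\bu\|$, so the scalars $\lambda_i\|\btheta_i-\btheta\|$ are bounded.

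Next I would use differentiability of the chosen extension $f$ at $\btheta$ to write the first-order expansion
\begin{align}
    f(\btheta_i)-f(\btheta) = \frac{\partial f}{\partial \btheta}(\btheta_i-\btheta) + R_i,
\end{align}
with $R_i = o(\|\btheta_i-\btheta\|)$, and then multiply through by $\lambda_i$ to obtain
\begin{align}
    \lambda_i\big(f(\btheta_i)-f(\btheta)\big) = \frac{\partial f}{\partial \btheta}\,\lambda_i(\btheta_i-\btheta) + \lambda_i R_i.
\end{align}
The first term on the right tends to $\frac{\partial f}{\partial\btheta}\bu$ by the defining convergence of the sequence. For the remainder I would factor $\lambda_i R_i = \big(\lambda_i\|\btheta_i-\btheta\|\big)\cdot\frac{R_i}{\|\btheta_i-\btheta\|}$, where the first factor is bounded and the second tends to zero; hence $\lambda_i R_i\to\zero$ and, passing to the limit,
\begin{align}
    \frac{\partial f}{\partial \btheta}\bu = \lim_{i\to\infty}\lambda_i\big(f(\btheta_i)-f(\btheta)\big).
\end{align}

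The payoff is that the right-hand side is evaluated only at the points $\btheta_i,\btheta\in\Theta$, so it is determined entirely by $f|_\Theta$. Therefore, if two differentiable extensions agree on $\Theta$, they agree at every $\btheta_i$ and at $\btheta$, making the displayed limit—and hence $\frac{\partial f}{\partial\btheta}\bu$—identical for both; this is exactly the claimed independence from the choice of extension. The main obstacle I anticipate is the remainder control: the factor $\lambda_i$ may blow up, so one must carefully confirm that $\lambda_i\|\btheta_i-\btheta\|$ remains bounded (which follows precisely from $\lambda_i(\btheta_i-\btheta)\to\bu$) in order to guarantee that multiplying the $o(\|\btheta_i-\btheta\|)$ term by $\lambda_i$ does not corrupt the limit.
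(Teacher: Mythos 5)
Your proof is correct and follows essentially the same route as the paper's: pick the defining sequences $\{\btheta_i\}$, $\{\lambda_i\}$, expand $f(\btheta_i)-f(\btheta)$ to first order, multiply by $\lambda_i$, and pass to the limit to express $\frac{\partial f}{\partial \btheta}\bu$ as $\lim_{i\to\infty}\lambda_i\big(f(\btheta_i)-f(\btheta)\big)$, which depends only on $f|_\Theta$. Your explicit remainder control (boundedness of $\lambda_i\|\btheta_i-\btheta\|$ and the separate treatment of $\bu=\zero$) is slightly more careful than the paper's terse $o(\|\lambda_i(\btheta_i-\btheta)\|)$ bookkeeping, but it is the same argument.
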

\begin{proof}
From $\bu \in \cT_\Theta(\btheta)$, there is a sequence $\{\btheta_i\in\Theta\}$ and a sequence $\{\lambda_i>0\}$ with $\btheta_i \rightarrow \btheta$ and $\lambda_i \, (\btheta_i-\btheta) \rightarrow \bu$. From the differentiability,
\begin{align}
    f(\btheta_i)-f(\btheta) = \frac{\partial f}{\partial \btheta} (\btheta_i-\btheta) + o(\|\btheta_i-\btheta\|),
\end{align}
which is equivalent to
\begin{align}
    \frac{\partial f}{\partial \btheta}\cdot \lambda_i ( \btheta_i-\btheta ) = \lambda_i \big( f(\btheta_i)-f(\btheta) \big) + o(\|\lambda_i(\btheta_i-\btheta)\|).
\end{align}
Letting $i\rightarrow \infty$, we have that
\begin{align}
    \frac{\partial f}{\partial \bu} = \lim_{i\rightarrow \infty} \lambda_i\big(f(\btheta_i)-f(\btheta)\big),
\end{align}
whose right-hand side is determined by the restriction of $f$ to $\Theta$.
\end{proof}

\section{}
\label{app:schur}


The Schur Complement Lemma is the workhorse of the proofs in Apps. \ref{app:unconstrained_bound} and \ref{app:constrained_bound}.

\begin{lemma} \textit{(Schur complement lemma)}
\label{lemma:schur}
The block matrix
\begin{align}
    \begin{bmatrix}
        \bA &\bB\\
        \bB^\top & \bC
    \end{bmatrix} \label{block_matrix}
\end{align}
is positive-semidefinite
if and only if
\begin{align}
    \bC & \geq\zero \label{schur_condition1}  \\
    \bA - \bB\bC^\dagger\bB^\top & \geq \zero  \label{schur_condition2}  \\
    (\bI-\bC\bC^\dagger)\bB^\top & =\zero, \label{schur_condition3} 
\end{align}
where, since $\bC\bC^\dagger$ projects onto the column space of $\bC$, the last condition is equivalent to
\begin{align}
    \col\bB^\top \subset \col\bC . \label{rangeCondition}
\end{align}
\end{lemma}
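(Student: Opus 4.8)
The plan is to test positive-semidefiniteness against the quadratic form it induces. Partitioning a generic vector as $\bx = [\bu^\top \ \bv^\top]^\top$ to conform with the blocks, the matrix in \eqref{block_matrix} is positive-semidefinite if and only if
\begin{align}
    \bu^\top\bA\bu + 2\bu^\top\bB\bv + \bv^\top\bC\bv \geq 0 \label{quadform}
\end{align}
for every $\bu$ and $\bv$. I would prove the two directions of the equivalence separately, handling the pseudoinverse with care since $\bC$ need not be invertible.

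For the forward direction, setting $\bu = \zero$ in \eqref{quadform} forces $\bv^\top\bC\bv\geq 0$ for all $\bv$, which is \eqref{schur_condition1}. For the range condition \eqref{rangeCondition}, I would argue by contradiction: if $\bB^\top\bu$ had a nonzero component $\bd$ lying outside $\col\bC$---equivalently, in $\ker\bC$, since $\bC$ is symmetric and positive-semidefinite---then taking $\bv = -t\bd$ annihilates $\bv^\top\bC\bv$ while the cross term $2\bu^\top\bB\bv = -2t\,\|\bd\|^2$ decreases linearly in $t$, driving \eqref{quadform} to $-\infty$ and contradicting positive-semidefiniteness. With \eqref{rangeCondition} secured, the quadratic form can be minimized over $\bv$ for fixed $\bu$: the minimizer is $-\bC^\dagger\bB^\top\bu$ and the minimum value equals $\bu^\top(\bA - \bB\bC^\dagger\bB^\top)\bu$, whose nonnegativity for all $\bu$ is exactly \eqref{schur_condition2}.

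For the reverse direction, I would complete the square. Whenever $\col\bB^\top\subset\col\bC$, so that $\bC\bC^\dagger\bB^\top = \bB^\top$, expanding the right-hand side below and invoking the Moore-Penrose identity $\bC^\dagger\bC\bC^\dagger = \bC^\dagger$ verifies
\begin{align}
    \bx^\top \begin{bmatrix} \bA & \bB \\ \bB^\top & \bC \end{bmatrix} \bx = \big(\bv + \bC^\dagger\bB^\top\bu\big)^\top \bC \big(\bv + \bC^\dagger\bB^\top\bu\big) + \bu^\top\big(\bA - \bB\bC^\dagger\bB^\top\big)\bu. \label{csquare}
\end{align}
The first summand is nonnegative by \eqref{schur_condition1} and the second by \eqref{schur_condition2}, so \eqref{quadform} holds and the block matrix is positive-semidefinite.

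The main obstacle is the correct bookkeeping of the pseudoinverse in the singular case: both the minimization over $\bv$ and the completion of squares in \eqref{csquare} are valid only after the range condition has forced $\bB^\top\bu \in \col\bC$, so the logical order---establishing \eqref{rangeCondition} first---is essential rather than cosmetic. The stated equivalence between \eqref{schur_condition3} and \eqref{rangeCondition} is then immediate, since $\bC\bC^\dagger$ is the orthogonal projector onto $\col\bC$ and $(\bI-\bC\bC^\dagger)\bB^\top=\zero$ says precisely that every column of $\bB^\top$ is fixed by that projector.
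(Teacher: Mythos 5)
Your proof is correct, but it follows a genuinely different route from the paper's. The paper works at the matrix level: for the ``if'' part it conjugates the block matrix in \eqref{block_matrix} by unit-triangular matrices to reduce it to $\diag(\bA-\bB\bC^\dagger\bB^\top,\,\bC)$ --- a step that needs \eqref{schur_condition3} to annihilate the residual off-diagonal blocks $\bB(\bI-\bC^\dagger\bC)$ --- and for the ``only if'' part it reads \eqref{schur_condition1} and \eqref{schur_condition2} off the same congruence, then proves \eqref{rangeCondition} by a square-root trick: for $\bx\in\operatorname{null}\bC$, the vanishing of the quadratic form at $(\zero,\bx)$ forces the square root of the block matrix to annihilate that vector, hence $\bB\bx=\zero$ and $\operatorname{null}\bC\subset\operatorname{null}\bB$. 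Your reverse direction is essentially the quadratic-form shadow of the paper's congruence --- your completed-square identity is that matrix identity evaluated on a vector --- so the two arguments coincide there; the genuinely different ingredient is your forward direction, where the unbounded-below argument with $\bv=-t\bd$ and $t\to\infty$ replaces the square-root step. (Amusingly, this limiting trick parallels the paper's secondary derivation after the lemma, following \cite{stoica2001parameter}, where setting $\bS_2=\frac{t}{2}\bB_2^\top$ and letting $t\to\infty$ forces $\bB_2\bB_2^\top=\zero$.) What your route buys is elementarity: no matrix square roots or factorizations, only the orthogonal splitting $\col\bC\oplus\ker\bC$ and completion of squares, plus a transparent reading of why \eqref{schur_condition3} is indispensable --- without it the infimum over $\bv$ is $-\infty$. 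What the paper's route buys is a reusable matrix-level identity, which it indeed recycles elsewhere (e.g., for the one-sided statement \eqref{claustre} and in the CRB proofs of its appendices). Two small points to tighten in your write-up: computing the cross term as $-2t\|\bd\|^2$ implicitly uses $\bc\perp\bd$ in the decomposition $\bB^\top\bu=\bc+\bd$, which holds because $(\col\bC)^\perp=\ker\bC$ for symmetric $\bC$ --- say so explicitly; and the claim that $-\bC^\dagger\bB^\top\bu$ globally minimizes the form over $\bv$ deserves one line (convexity from $\bC\geq\zero$, or simply observe that your completed-square identity already exhibits the minimum, which makes the separate minimization step redundant).
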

\begin{proof}
Although proofs are readily available in many texts
\cite{albert1969conditions, kreindler1972conditions, zhang2006schur},
a self-contained one is provided herein. The `if' part follows from the identity
\begin{align}
    &\begin{bmatrix}
        \bI & -\bB\bC^\dagger\\
        \zero & \bI
    \end{bmatrix}
    \begin{bmatrix}
        \bA &\bB\\
        \bB^\top & \bC
    \end{bmatrix}
    \begin{bmatrix}
        \bI & \zero\\
        -\bC^\dagger\bB^\top & \bI
    \end{bmatrix}\\
    &=\begin{bmatrix}
        \bA-\bB\bC^\dagger\bB^\top-\bB(\bC^\dagger - \bC^\dagger\bC\bC^\dagger)\bB^\top & \bB(\bI-\bC^\dagger\bC) \\
        (\bI-\bC\bC^\dagger)\bB^\top & \bC
    \end{bmatrix} \nonumber\\
    &=\begin{bmatrix}
        \bA-\bB\bC^\dagger\bB^\top & \bB(\bI-\bC^\dagger\bC) \\
        (\bI-\bC\bC^\dagger)\bB^\top & \bC
    \end{bmatrix} \label{block_identity} \\
    & = \begin{bmatrix}
        \bA-\bB\bC^\dagger\bB^\top & \zero \\
        \zero & \bC
    \end{bmatrix}, \label{aclasse}
\end{align}
where \eqref{block_identity} holds by virtue of a property of Moore-Penrose inverse, $\bC^\dagger = \bC^\dagger\bC\bC^\dagger$, whereas \eqref{aclasse} follows from condition
\eqref{schur_condition3}. Then, the block matrix in \eqref{block_matrix} can be expressed as 
\begin{align}
    \begin{bmatrix}
        \bI & -\bB\bC^\dagger\\
        \zero & \bI
    \end{bmatrix}^{-1}
    \begin{bmatrix}
        \bA-\bB\bC^\dagger\bB^\top & \zero \\
        \zero & \bC
    \end{bmatrix}
    \begin{bmatrix}
        \bI & \zero\\
        -\bC^\dagger\bB^\top & \bI
    \end{bmatrix}^{-1}, \nonumber
\end{align}
which is positive-semidefinite because the central matrix is positive-semidefinite from conditions \eqref{schur_condition1} and \eqref{schur_condition2}, while the leading and trailing matrices are invertible as they are triangular with ones on the diagonals.

Turning to the `only if' part, from the identity above, the block matrix in \eqref{block_identity} is positive-semidefinite, which implies the positive-semidefiniteness of the submatrices $\bC$ and $\bA-\bB\bC^\dagger\bB^\top$. It then remains to prove that the positive-semidefiniteness of \eqref{block_matrix} implies \eqref{rangeCondition}. Letting $\bx\in\operatorname{null}\bC$,
\begin{align}
    &\begin{bmatrix}
    \zero^\top & \bx^\top
    \end{bmatrix}
    \begin{bmatrix}
        \bA &\bB\\
        \bB^\top & \bC
    \end{bmatrix}
    \begin{bmatrix}
    \zero \\ \bx
    \end{bmatrix} = 
    \bx^\top\bC\bx = 0\\
    & \label{paella} \Rightarrow 
    \begin{bmatrix}
        \bA &\bB\\
        \bB^\top & \bC
    \end{bmatrix}^{\frac{1}{2}}
    \begin{bmatrix}
    \zero \\ \bx
    \end{bmatrix} = \zero\\
    &\Rightarrow 
    \begin{bmatrix}
    \bB\bx \\ \bC\bx
    \end{bmatrix} =\begin{bmatrix}
        \bA &\bB\\
        \bB^\top & \bC
    \end{bmatrix}
    \begin{bmatrix}
    \zero \\ \bx
    \end{bmatrix}\\
    &\qquad\qquad\,= \begin{bmatrix}
        \bA &\bB\\
        \bB^\top & \bC
    \end{bmatrix}^{\frac{1}{2}}\cdot
    \begin{bmatrix}
        \bA &\bB\\
        \bB^\top & \bC
    \end{bmatrix}^{\frac{1}{2}}
    \begin{bmatrix}
    \zero \\ \bx
    \end{bmatrix}  = \zero\nonumber\\
    &\Rightarrow \bB\bx = \zero\\
    &\Rightarrow \bx \in \operatorname{null}\bB,
\end{align}
where \eqref{paella} follows from the positive-semidefiniteness of the block matrix in \eqref{block_matrix}. This proves $\operatorname{null}\bC \subset \operatorname{null}\bB$, which is equivalent to \eqref{rangeCondition}.

\end{proof}

Also noteworthy is that the upper-left block of \eqref{block_identity},
\begin{align}
    \bA-\bB\bC^\dagger\bB^\top
    =\begin{bmatrix}
        \bI & -\bB\bC^\dagger
    \end{bmatrix}
    \begin{bmatrix}
        \bA &\bB\\
        \bB^\top & \bC
    \end{bmatrix}
    \begin{bmatrix}
        \bI\\
        -\bC^\dagger\bB^\top
    \end{bmatrix}, \nonumber
\end{align}
proves by itself a weaker version of the Schur complement lemma, namely that
\begin{align}
    \begin{bmatrix}
        \bA &\bB\\
        \bB^\top & \bC
    \end{bmatrix} \geq \zero \Rightarrow \bA-\bB\bC^\dagger\bB^\top \geq \zero.
    \label{claustre}
\end{align}
This derivation can be found in \cite[Lemma 1]{gorman1990lower},
extended to
\begin{align}
    \begin{bmatrix}
        \bI & -\bS^\top
    \end{bmatrix}
    \begin{bmatrix}
        \bA &\bB\\
        \bB^\top & \bC
    \end{bmatrix}
    \begin{bmatrix}
        \bI\\
        -\bS
    \end{bmatrix} \label{stoica_and_marzetta}
\end{align}
in \cite{stoica2001parameter}, of which \eqref{claustre} is a special case for $\bS = \bB\bC^\dagger$.
Actually, 
the proof in \cite{stoica2001parameter} can be understood as a proof of the forward part of the Schur complement lemma.

Consider the spectral decomposition
\begin{align}
    \bC = \bU\bLambda\bU^\top = 
    \begin{bmatrix}
        \bU_1 & \bU_2
    \end{bmatrix}
    \begin{bmatrix}
        \bLambda_1 & \zero\\
        \zero & \zero
    \end{bmatrix}
    \begin{bmatrix}
        \bU_1^\top \\ \bU_2^\top
    \end{bmatrix}, 
    \label{spectral_decomposition}
\end{align}
where $\bLambda_1$ is a diagonal matrix with positive entries.
Letting
\begin{align}
    \begin{bmatrix}
    \bB_1 & \bB_2    
    \end{bmatrix}
     = \bB\bU \qquad\quad
    \begin{bmatrix}
    \bS_1 \\ \bS_2    
    \end{bmatrix}
     = \bU^\top\bS,
\end{align}
\eqref{stoica_and_marzetta} becomes
\begin{align}
    & \!\!\!\!\! \begin{bmatrix}
        \bI & -\bS^\top
    \end{bmatrix}
    \begin{bmatrix}
        \bA &\bB\\
        \bB^\top & \bU\bLambda\bU^\top
    \end{bmatrix}
    \begin{bmatrix}
        \bI\\
        -\bS
    \end{bmatrix} \nonumber \\
    & \quad =\begin{bmatrix}
        \bI & -\bS^\top\bU
    \end{bmatrix}
    \begin{bmatrix}
        \bA &\bB\bU\\
        \bU^\top\bB^\top & \bLambda
    \end{bmatrix}
    \begin{bmatrix}
        \bI\\
        -\bU^\top\bS
    \end{bmatrix}\\
    & \quad =\begin{bmatrix}
        \bI & -\bS_1^\top & -\bS_2^\top
    \end{bmatrix}
    \begin{bmatrix}
        \bA &\bB_1 & \bB_2\\
        \bB_1^\top & \bLambda_1 & \zero\\
        \bB_2^\top & \zero & \zero
    \end{bmatrix}
    \begin{bmatrix}
        \bI\\
        -\bS_1\\
        -\bS_2
    \end{bmatrix}\\
    & \quad=\begin{bmatrix}
        \bI & -\bS_1^\top
    \end{bmatrix}
    \begin{bmatrix}
        \zero &\bB_1\\
        \bB_1^\top & \bLambda_1
    \end{bmatrix}
    \begin{bmatrix}
        \bI\\
        -\bS_1
    \end{bmatrix}  \nonumber \\
    & \quad\quad + \bA - \bB_2\bS_2 - \bS_2^\top\bB_2^\top.
\end{align}
Completing the square turns the above into
\begin{align}
    & \begin{bmatrix}
        \bI & -\bS_1^\top
    \end{bmatrix}
    \begin{bmatrix}
        \bB_1\bLambda_1^{-1}\bB_1^\top &\bB_1\\
        \bB_1^\top & \bLambda_1
    \end{bmatrix}
    \begin{bmatrix}
        \bI\\
        -\bS_1
    \end{bmatrix} \nonumber \\
    &\qquad\qquad + \bA - \bB_2\bS_2 - \bS_2^\top\bB_2^\top - \bB_1\bLambda_1^{-1}\bB_1^\top, \nonumber
\end{align}
with the leading term further simplified into
\begin{align}
    &\begin{bmatrix}
        \bI & -\bS_1^\top
    \end{bmatrix}
    \begin{bmatrix}
        \bB_1\bLambda_1^{-1}\\
        \bI
    \end{bmatrix}
    \bLambda_1
    \begin{bmatrix}
        \bLambda_1^{-1}\bB_1^\top &
        \bI
    \end{bmatrix}
    \begin{bmatrix}
        \bI\\
        -\bS_1
    \end{bmatrix}\\
    &\qquad =(\bLambda_1^{-1}\bB_1^\top-\bS_1)^\top\bLambda_1 \,
    (\bLambda_1^{-1}\bB_1^\top-\bS_1).
\end{align}
If the block matrix in \eqref{block_matrix} is positive-semidefinite, then
\begin{align}
    &(\bLambda_1^{-1}\bB_1^\top-\bS_1)^\top\bLambda_1 \,
    (\bLambda_1^{-1}\bB_1^\top-\bS_1)\label{completing_the_square}\\
    &\qquad\qquad\qquad + \bA - \bB_2\bS_2 - \bS_2^\top\bB_2^\top - \bB_1\bLambda_1^{-1}\bB_1^\top \geq \zero. \nonumber
\end{align}
As $\bLambda_1$ has positive diagonal entries, the above inequality is tightest for $\bS_1 = \bLambda_1^{-1}\bB_1^\top$, resulting in
\begin{align}
    \bA \geq \bB_2\bS_2 + \bS_2^\top\bB_2^\top + \bB_1\bLambda_1^{-1}\bB_1^\top \label{quadratic_form_simplified}
\end{align}
with the latter term further simplifying into
\begin{align}
    \bB_1\bLambda_1^{-1}\bB_1^\top &= \bB\bU_1\bLambda_1^{-1}\bU_1^\top\bB^\top\\
    &=\bB\bC^\dagger\bB^\top.
\end{align}
As \eqref{quadratic_form_simplified} holds for every $\bS_2$, it holds for $\bS_2 = \zero$, which proves that $\bA\geq \bB\bC^\dagger\bB^\top$ and thus condition \eqref{schur_condition2}.

To prove condition \eqref{schur_condition3}, in turn, let $\bS_2 = \frac{t}{2}\bB_2^\top$, whereby \eqref{quadratic_form_simplified} becomes
\begin{align}
    \bB_2\bB_2^\top\leq \frac{1}{t} \left(\bA - \bB\bC^\dagger\bB^\top \right),
\end{align}
which, letting $t\rightarrow \infty$ and given the positive-semidefiniteness of $\bB_2\bB_2^\top$, gives $\bB_2\bB_2^\top = \zero$.
From
\begin{align}
    \bB_2\bB_2^\top &= \bB\bU_2\bU_2^\top\bB^\top\\
    &= \bB \left( \bI-\bU_1\bU_1^\top \right) \bB^\top\\
    &= \bB \left( \bI-\bC\bC^\dagger \right) \bB^\top\\
    &= \left(\bB \left( \bI-\bC\bC^\dagger \right)\right) \left( \left(\bI-\bC\bC^\dagger \right)\bB^\top\right),
\end{align}
it follows that, as desired,
\begin{align}
    \left( \bI-\bC\bC^\dagger \right)\bB^\top = \zero .
\end{align}

The proof of the converse part is not available in \cite{stoica2001parameter}, but can be obtained in a similar fashion as follows.
Under \eqref{schur_condition1}--\eqref{schur_condition3},
\begin{align}
    &\bA-\bB\bC^\dagger\bB^\top + (\bS_1 - \bLambda_1^{-1}\bB_1^\top)^\top\bLambda_1(\bS_1 - \bLambda_1^{-1}\bB_1^\top) \geq \zero \nonumber
\end{align}
for any $\bS$. In conjunction with
\begin{align}
    \bB_2\bS_2 &= \bB\bU_2\bU_2^\top\bS\\
    &= \bB(\bI-\bU_1\bU_1^\top)\bS\\
    &= \bB(\bI-\bC\bC^\dagger)\bS=\zero,
\end{align}
it can be shown that \eqref{completing_the_square} holds for any $\bS$, ensuring the positive-semidefiniteness of \eqref{stoica_and_marzetta} for any $\bS$. For any vector
\begin{align}
    \bw = \begin{bmatrix}
        \bw_1\\
        \bw_2
    \end{bmatrix}
\end{align}
where $\bw_1 \neq \zero$,
the matrix $\bS = -\frac{\bw_2\bw_1^\top}{\|\bw_1\|^2}$ satisfies
\begin{align}
    \bw \in \col \begin{bmatrix}
        \bI \\ -\bS
    \end{bmatrix}
\end{align}
as
\begin{align}
    \bw = \begin{bmatrix}
        \bI \\ \frac{\bw_2\bw_1^\top}{\|\bw_1\|^2}
    \end{bmatrix}\bw_1.
\end{align}
Therefore,
\begin{align}
    \bw^\top \begin{bmatrix}
        \bA &\bB\\
        \bB^\top & \bC
    \end{bmatrix} \bw \geq 0
\end{align}
for every $\bw$ for which $\bw_1 \neq \zero$. Continuity of the left-hand side extends the validity of the statement to $\bbR^k$.

\section{}
\label{app:unconstrained_bound}
Consider the random vector
\begin{align}
\begin{bmatrix}
    \hat{\btheta}-\bbE[\hat{\btheta}]\\
    \big(\frac{\partial L}{\partial \btheta}\big)^{\!\!\top}
\end{bmatrix},
\end{align}
whose covariance is
\begin{align}
    \begin{bmatrix}
        \bbE\big[(\hat{\btheta}-\bbE[\hat{\btheta}])(\hat{\btheta}-\bbE[\hat{\btheta}])^{\top}\big] &  \bbE\Big[(\hat{\btheta}-\bbE[\hat{\btheta}]) \frac{\partial L}{\partial \btheta} \Big] \\
        \bbE\Big[\big(\frac{\partial L}{\partial \btheta}\big)^{\!\top} (\hat{\btheta}-\bbE[\hat{\btheta}])^\top\Big] &
        \bbE\Big[\big(\frac{\partial L}{\partial \btheta}\big)^{\!\top}
        \frac{\partial L}{\partial \btheta}
        \Big]
    \end{bmatrix}. \label{covariance_unconstrained}
\end{align}
Now consider two identities that hold for $\btheta \in \Theta$, namely
\begin{align}
    1 = \int p(\by;\btheta) \, d\by
    \label{identity1}
\end{align}    
    and
\begin{align}
\label{identity2}
    \btheta + \bb = \int \hat{\btheta} \, p(\by;\btheta) \, d\by.
\end{align}
Differentiating both sides of each identity with respect to $\btheta$ gives, respectively,
\begin{align}
    \zero^\top &  = \int \frac{\partial p(\by;\btheta)}{\partial \btheta} d\by \\
    & =\int \frac{1}{p(\by;\btheta)}\frac{\partial p(\by;\btheta)}{\partial \btheta} \cdot  p(\by;\btheta) \, d\by\\
    & = \bbE\bigg[\frac{\partial L}{\partial \btheta}\bigg]
\end{align}
and 
\begin{align}
    \bI + \frac{\partial \bb}{\partial \btheta} & = \int \hat{\btheta}\frac{\partial p(\by;\btheta)}{\partial \btheta} d\by\\
    &= \bbE\bigg[\hat{\btheta} \frac{\partial L}{\partial \btheta}\bigg],
\end{align}
which turn \eqref{covariance_unconstrained} into
\begin{align}
    \begin{bmatrix}
        \bbE\big[(\hat{\btheta}-\bbE[\hat{\btheta}])(\hat{\btheta}-\bbE[\hat{\btheta}])^{\top}\big] &  \bI+\frac{\partial \bb}{\partial \btheta} \\
        \big(\bI+\frac{\partial \bb}{\partial \btheta}\big)^{\!\top} &
        \bbE\Big[\big(\frac{\partial L}{\partial \btheta}\big)^{\!\top}
        \frac{\partial L}{\partial \btheta}
        \Big]
    \end{bmatrix}.
\end{align}
The application of the Schur complement lemma, given in App. \ref{app:schur}, concludes the proof.

\section{}
\label{app:constrained_bound}

Consider the random vector
\begin{align}
\begin{bmatrix}
    \hat{\btheta}-\bbE[\hat{\btheta}]\\
    \frac{\partial L}{\partial \bu_1} \\ \vdots \\ \frac{\partial L}{\partial \bu_\ell}
\end{bmatrix}
\end{align}
with covariance
\begin{align}
    \begin{bmatrix}
        \bbE\big[(\hat{\btheta}-\bbE[\hat{\btheta}])(\hat{\btheta}-\bbE[\hat{\btheta}])^{\!\top} \big] &  \bbE\Big[(\hat{\btheta}-\bbE[\hat{\btheta}]) \! \begin{bmatrix}
        \frac{\partial L}{\partial \bu_1} & \!\cdots\! & \frac{\partial L}{\partial \bu_\ell}
        \end{bmatrix} \! \Big] \\
        \bbE\left[\begin{bmatrix}
        \frac{\partial L}{\partial \bu_1} \\ \vdots \\ \frac{\partial L}{\partial \bu_\ell}
        \end{bmatrix} (\hat{\btheta}-\bbE[\hat{\btheta}])^{\!\top}\right] &
        \bbE\left[
        \begin{bmatrix}
            \frac{\partial L}{\partial \bu_1} \\ \vdots \\ \frac{\partial L}{\partial \bu_\ell}
        \end{bmatrix}
        \begin{bmatrix}
            \frac{\partial L}{\partial \bu_1} & \cdots & \frac{\partial L}{\partial \bu_\ell}
        \end{bmatrix}
        \right]
    \end{bmatrix} \label{covariance}
\end{align}
and the identities in \eqref{identity1} and \eqref{identity2}.
Taking directional derivatives of both sides of each identity with respect to $\bu_i\in\cT_{\Theta}(\btheta)$ gives, respectively,
\begin{align}
    0  = \int \frac{\partial p(\by;\btheta)}{\partial \bu_i} d\by = \bbE\bigg[\frac{\partial L}{\partial \bu_i}\bigg]
\end{align}
and
\begin{align}
    \bu_i + \frac{\partial \bb}{\partial \bu_i} & = \int \hat{\btheta}\frac{\partial p(\by;\btheta)}{\partial \bu_i} d\by = \bbE\bigg[\hat{\btheta} \frac{\partial L}{\partial \bu_i}\bigg].
\end{align}
These turn \eqref{covariance} into
\begin{align}
    \begin{bmatrix}
        \bbE[(\hat{\btheta}-\bbE[\hat{\btheta}])(\hat{\btheta}-\bbE[\hat{\btheta}])^{\!\top}] &  \begin{bmatrix}
            \bu_1 + \frac{\partial \bb}{\partial \bu_1} & \cdots & \bu_\ell + \frac{\partial \bb}{\partial \bu_\ell}
        \end{bmatrix} \\
        \begin{bmatrix}
            \big(\bu_1 + \frac{\partial \bb}{\partial \bu_1}\big)^{\!\top} \\ \vdots \\ \big(\bu_\ell + \frac{\partial \bb}{\partial \bu_\ell}\big)^{\!\top}
        \end{bmatrix} &
        \bbE\left[
        \begin{bmatrix}
            \frac{\partial L}{\partial \bu_1} \\ \vdots \\ \frac{\partial L}{\partial \bu_\ell}
        \end{bmatrix}
        \begin{bmatrix}
            \frac{\partial L}{\partial \bu_1} & \cdots & \frac{\partial L}{\partial \bu_\ell}
        \end{bmatrix}
        \right]
    \end{bmatrix},
\end{align}
which can be further simplified into
\begin{align}
    \begin{bmatrix}
        \bbE[(\hat{\btheta}-\bbE[\hat{\btheta}])(\hat{\btheta}-\bbE[\hat{\btheta}])^{\!\top}] & \quad  \big(\bI + \frac{\partial \bb}{\partial \btheta}\big)\bU \\
         \bU^\top\big(\bI + \frac{\partial \bb}{\partial \btheta}\big)^{\!\!\top} &
        \quad \bU^\top\bJ\bU
    \end{bmatrix} \label{covariance_simple}
\end{align}
given that
\begin{align}
    \frac{\partial \bb}{\partial \bu_i} = \frac{\partial \bb}{\partial \btheta} \bu_i \qquad\quad \frac{\partial L}{\partial \bu_i} = \frac{\partial L}{\partial \btheta} \bu_i. \label{chain_rule}
\end{align}
The application of the Schur complement lemma, given in App. \ref{app:schur}, concludes the proof.

\section{}
\label{ICREA}

The implication (C)$\;\Rightarrow\;$(A) is trivial, and what needs to be proved is (A)$\;\Rightarrow\;$(B) and (B)$\;\Rightarrow\;$(C).
Let us begin with (A)$\;\Rightarrow\;$(B).
As per Lemma~\ref{lemma:existence_V}, there exists a matrix $\bV = [\bv_1\, \cdots \, \bv_d]$ whose columns are in $\cT_\Theta(\btheta)$ and whose column space is $\vspan \cT_\Theta(\btheta)$.
As its column space equals $\vspan \cT_\Theta(\btheta)$, we have
\begin{align}
    \bPi = \bV \bV^\dagger = (\bV^\dagger)^\top\bV^\top.
\end{align}
As each column of $\bV$ is in $\cT_\Theta(\btheta)$, we further have---recall \eqref{axler}---that
\begin{align}
    \col \! \left(\bV^\top\bigg(\bI + \frac{\partial \bb}{\partial \btheta}\bigg)^{\!\!\top}\right) \subset \col \! \big( \bV^\top\bJ^{\frac{1}{2}} \big).
\end{align}
Multiplying both sides by $(\bV^\dagger)^\top$ gives
\begin{align}
    \col \! \left(\bPi \, \bigg(\bI + \frac{\partial \bb}{\partial \btheta}\bigg)^{\!\!\top}\right) \subset \col \! \big(\bPi\bJ^{\frac{1}{2}} \big),
\end{align}
and applying \eqref{axler} yields \eqref{range_ccrb_projection}.
As of (B)$\,\Rightarrow$(C), it holds because, for any $\bU$ whose columns are in $\cT_\Theta(\btheta)$,
\begin{align}
    \col\!\left(\bU^\top\bigg(\bI + \frac{\partial \bb}{\partial \btheta}\bigg)^{\!\!\top}\right) & = \col\!\left(\bU^\top\bPi \, \bigg(\bI + \frac{\partial \bb}{\partial \btheta}\bigg)^{\!\!\top}\right) \nonumber \\
    &\subset \col \! \big( \bU^\top\bPi\bJ^{\frac{1}{2}} \big) \\
    &=\col \! \big( \bU^\top\bJ^{\frac{1}{2}} \big), 
\end{align}
where the first and the last steps follow from $\bPi \bU = \bU$ while the middle step follows from the initial assumption.



\section{}
\label{bombers}
Every spanning set contains a basis in a finite-dimensional vector space \cite[Thm. 3.5]{nering1970linear}. 
Let us pick $\bv_1$ from $\cT_\Theta(\btheta)$. For $i\geq 2$, let us choose $\bv_i\in\cT_\Theta(\btheta)\setminus\vspan\{\bv_1, \ldots, \bv_{i-1}\}$ until the set becomes empty. The procedure must terminate within finite steps 
as $\cT_\Theta(\btheta)$ is a subset of a finite-dimensional vector space. By construction, $\{\bv_1,\ldots,\bv_d\}$ is linearly independent, and it spans $\vspan \cT_\Theta(\btheta)$ from
\begin{align}
    &\{\bv_1,\ldots,\bv_d\} \subset \cT_\Theta(\btheta)\\
    &\qquad \Rightarrow \vspan\{\bv_1,\ldots,\bv_d\} \subset \vspan\cT_\Theta(\btheta)
\end{align}
and
\begin{align}
    &\cT_\Theta(\btheta)\setminus\vspan\{\bv_1, \ldots, \bv_d\} = \emptyset\\
    &\qquad\Leftrightarrow \cT_\Theta(\btheta) \subset \vspan\{\bv_1, \ldots, \bv_d\}\\
    &\qquad\Rightarrow \vspan \cT_\Theta(\btheta) \subset \vspan\{\bv_1, \ldots, \bv_d\}.
\end{align}

\section{}
\label{DraR}

The proof entails applying the Schur complement lemma to the positive-semidefinite matrix
\begin{align*}
   &\begin{bmatrix}
       \bW_2(\bW_2^\top\bJ\bW_2)^{-1}\bW_2^\top & \bW_1\\
       \bW_1^\top & \bW_1^\top\bJ\bW_1
   \end{bmatrix}\\
   &=\begin{bmatrix}
       \bW_2(\bW_2^\top\bJ\bW_2)^{-1}\bW_2^\top & \bW_2\bW_2^\dagger\bW_1\\
       \bW_1^\top(\bW_2^\dagger)^\top\bW_2^\top & \bW_1^\top(\bW_2^\dagger)^\top\bW_2^\top\bJ\bW_2\bW_2^\dagger\bW_1
   \end{bmatrix}\!\! \nonumber \\
   &=\begin{bmatrix}
       \bW_2\\ \bW_1^\top(\bW_2^\dagger)^\top(\bW_2^\top\bJ\bW_2)
   \end{bmatrix}\\
   &\qquad\qquad\qquad \cdot (\bW_2^\top\bJ\bW_2)^{-1}
   \begin{bmatrix}
       \bW_2^\top & (\bW_2^\top\bJ\bW_2)\bU_2^\dagger\bW_1
   \end{bmatrix}, \nonumber
\end{align*}
where $\bW_2\bW_2^\dagger\bW_1 = \bW_1$ was used. An alternative derivation of this result can be found in \cite[App. B]{tang2011lower}.

\section{}
\label{SOMMA}
Using the following facts, one can obtain the series of relationships in \eqref{ccrb_monotonicity_proof_start}--\eqref{ccrb_monotonicity_proof_end}.
\begin{figure*}
\begin{align}
    &\bigg(\bI + \frac{\partial \bb}{\partial \btheta}\bigg)\bW_1(\bW_1^\top \bJ \bW_1)^{\dagger}\bW_1^\top\bigg(\bI + \frac{\partial \bb}{\partial \btheta}\bigg)^{\!\!\top} \nonumber \\
    &=\bigg(\bI + \frac{\partial \bb}{\partial \btheta}\bigg)\bW_2\bW_2^\dagger\bW_1(\bW_1^\top \bJ \bW_1)^{\dagger}\bW_1^\top(\bW_2^\dagger)^\top\bW_2^\top\bigg(\bI + \frac{\partial \bb}{\partial \btheta}\bigg)^{\!\!\top} &&\text{F1} \label{ccrb_monotonicity_proof_start} \\
    &=\bigg(\bI + \frac{\partial \bb}{\partial \btheta}\bigg)\bW_2(\bJ^{\frac{1}{2}}\bW_2)^\dagger\bJ^{\frac{1}{2}}\bW_2\bW_2^\dagger\bW_1(\bW_1^\top \bJ \bW_1)^{\dagger}\bW_1^\top(\bW_2^\dagger)^\top\bW_2^\top \bJ^{\frac{1}{2}}(\bW_2^\top \bJ^{\frac{1}{2}})^{\dagger}\bW_2^\top\bigg(\bI + \frac{\partial \bb}{\partial \btheta}\bigg)^{\!\!\top} &&\text{F2}\\
    &=\bigg(\bI + \frac{\partial \bb}{\partial \btheta}\bigg)\bW_2(\bJ^{\frac{1}{2}}\bW_2)^\dagger\bJ^{\frac{1}{2}}\bW_1(\bW_1^\top \bJ \bW_1)^{\dagger}\bW_1^\top \bJ^{\frac{1}{2}}(\bW_2^\top \bJ^{\frac{1}{2}})^{\dagger}\bW_2^\top\bigg(\bI + \frac{\partial \bb}{\partial \btheta}\bigg)^{\!\!\top} &&\text{F1}\\
    &\leq \bigg(\bI + \frac{\partial \bb}{\partial \btheta}\bigg)\bW_2(\bJ^{\frac{1}{2}}\bW_2)^\dagger(\bW_2^\top \bJ^{\frac{1}{2}})^{\dagger}\bW_2^\top\bigg(\bI + \frac{\partial \bb}{\partial \btheta}\bigg)^{\!\!\top} &&\text{F3}\\
    &= \bigg(\bI + \frac{\partial \bb}{\partial \btheta}\bigg)\bW_2(\bW_2^\top\bJ\bW_2)^\dagger\bW_2^\top\bJ^{\frac{1}{2}}(\bW_2^\top \bJ^{\frac{1}{2}})^{\dagger}\bW_2^\top\bigg(\bI + \frac{\partial \bb}{\partial \btheta}\bigg)^{\!\!\top} &&\text{F4}\\
    &= \bigg(\bI + \frac{\partial \bb}{\partial \btheta}\bigg)\bW_2(\bW_2^\top\bJ\bW_2)^\dagger\bW_2^\top\bigg(\bI + \frac{\partial \bb}{\partial \btheta}\bigg)^{\!\!\top} &&\text{F2} \label{ccrb_monotonicity_proof_end}
\end{align}
\hrulefill
\end{figure*}
\begin{enumerate}
    \item[F1.] $\bW_2\bW_2^\dagger\bW_1 = \bW_1$ owing to $\col \bW_1 \subset \col \bW_2$.
    \item[F2.] $\bW_2^\top\big(\bI + \frac{\partial \bb}{\partial \btheta}\big)^{\!\top} = (\bW_2^\top\bJ^{\frac{1}{2}})(\bW_2^\top\bJ^{\frac{1}{2}})^\dagger\bW_2^\top\big(\bI + \frac{\partial \bb}{\partial \btheta}\big)^{\!\top}$ from $\col \bW_2^\top\big(\bI + \frac{\partial \bb}{\partial \btheta}\big)^{\!\top} \subset \col \bW_2^\top\bJ^{\frac{1}{2}}$ (recall Thm.~\ref{theorem:range_equivalence}).
    \item[F3.] $\bJ^{\frac{1}{2}}\bW_1(\bW_1^\top \bJ \bW_1)^{\dagger}\bW_1^\top \bJ^{\frac{1}{2}} = \bJ^{\frac{1}{2}}\bW_1 (\bJ^{\frac{1}{2}}\bW_1)^\dagger \leq \bI$ from the fact that the left-hand side is a projection matrix onto $\col(\bJ^{\frac{1}{2}}\bW_1)$.
    \item[F4.] $(\bJ^{\frac{1}{2}}\bW_2)^\dagger = (\bW_2^\top\bJ\bW_2)^\dagger\bW_2^\top\bJ^{\frac{1}{2}}$ from \eqref{moore_penrose_property}.
\end{enumerate}

\section{}
\label{Chelsea}

Let $\btheta^*$ be an extreme point of $\Theta$ and let $\hat{\btheta}(\by)$ be an unbiased estimator. With $\by$ generated from $p(\by;\btheta^*)$, given the unbiasedness,
\begin{align}
    \bbE_{\by \sim p(\by;\btheta^*)}\big[\hat{\btheta}(\by)\big] = \btheta^*.
\end{align}
The left-hand side is, by definition of expectation, a convex combination of the points in $\Theta$. Meanwhile, the right-hand side, by definition of extreme point, cannot be written as a nontrivial convex combination of the points in $\Theta$. Combining these two observations, 
it can be concluded that
\begin{align}
    \hat{\btheta}(\by) = \btheta^* \label{constant_estimator}
\end{align}
with probability one.
If there is another $\btheta\in\Theta$ such that the support of $p(\by;\btheta)$ is included in that of $p(\by;\btheta^*)$, one can surmise \eqref{constant_estimator} for $\by\sim p(\by;\btheta)$. This gives 
\begin{align}
    \bbE_{\by\sim p(\by;\btheta)}\big[\hat{\btheta}(\by)\big] = \btheta^*, \label{contradiction}
\end{align}
which contradicts the unbiased assumption.

\bibliographystyle{IEEEtran}
\bibliography{ref}

\end{document}